\documentclass{amsart}
\usepackage{geometry}
\usepackage{multicol}
\geometry{a4paper}
\usepackage{fullpage,graphicx,subfigure,mathpazo,color}
\usepackage{amsthm,amsmath,amscd,tikz,mathrsfs,amssymb}
\usepackage[normalem]{ulem}
\usepackage{setspace}
\usepackage{graphicx}
\usepackage{float}
\usepackage{indentfirst}
\numberwithin{equation}{section}
\newcommand{\ii}{\mathrm{i}}
\newcommand{\ee}{\mathrm{e}}
\newcommand{\dd}{\mathrm{d}}

\newtheorem{theorem}{Theorem}

\newtheorem{prop}{Proposition}

\usepackage{graphicx}      
\usepackage{titlesec}

\titleformat{\section}{\centering\LARGE\bfseries}{\thesection}{1em}{}
\titleformat{\subsection}{\Large\bfseries}{\thesubsection}{1em}{}
\numberwithin{figure}{section}

\begin{document}
	\title{Nondegenerate solitons in the integrable fractional coupled Hirota equation}
	\author{Ling An}
	\address{School of Mathematics, South China University of Technology, Guangzhou, China, 510641}
	\email{maal@mail.scut.edu.cn}
	
	\author{Liming Ling}
	\address{School of Mathematics, South China University of Technology, Guangzhou, China, 510641}
	\email{linglm@scut.edu.cn}
	
	\author{Xiaoen Zhang}
	\address{School of Mathematics, South China University of Technology, Guangzhou, China, 510641}
	\email{zhangxiaoen@scut.edu.cn}
	
	\begin{abstract}
		In this paper, based on the nonlinear fractional equations proposed by Ablowitz, Been, and Carr in the sense of Riesz fractional derivative, we explore the fractional coupled Hirota equation and give its explicit form. Unlike the previous nonlinear fractional equations, this type of nonlinear fractional equation is integrable. Therefore, we obtain the fractional $n$-soliton solutions of the fractional coupled Hirota equation by inverse scattering transformation in the reflectionless case. In particular, we analyze the one- and two-soliton solutions of the fractional coupled Hirota equation and prove that the fractional two-soliton can also be regarded as a linear superposition of two fractional single solitons as $|t|\to\infty$. Moreover, under some special constraint,
		we also obtain the nondegenerate fractional soliton solutions and give a simple analysis for them.\\
		{\bf Keywords:\ }fractional coupled Hirota equation, anomalous dispersive relation, inverse scattering transform, squared eigenfunction, nondegenerate soliton solution.
	\end{abstract}
	
	\maketitle
	\section{Introduction}
	Fractional calculus belongs to the field of mathematical analysis, which can be used to study the application of integrals and derivatives of any order. The fractional calculus first emerged from some speculations made by Leibniz ($1695,\ 1697$) and Euler ($1730$) and has been developed up to nowadays. Fractional differential equations have profound physical background and rich theoretical connotation, and have been well applied in many fields, such as viscoelastic hydrodynamics \cite{grigorenko2003chaotic,hayat2004periodic,tripathi2010peristaltic}, signal processing \cite{lohmann1996some,sejdic2011fractional}, material anomalous diffusion and heat conduction \cite{metzler1999anomalous,henry2006anomalous}, biology \cite{ahmed2007fractional,el2009exact,liu2011novel}, and so on. With the development of research, people began to explore the physical and mechanical background of the fractional operators and found that many classical mechanical theories defined by integer order differential operators were challenged. For example, because the turbulent velocity field of the atmosphere will change its vibration and direction randomly and violently, the velocity of the turbulent velocity field does not meet the requirements of classical mechanics with integer order differentiability. In this case, the non-classical mechanics based on fractional calculus can describe these ``abnormal" physical phenomena well. Against that backdrop, many fractional differential equation models have been established.
	
	So far, there are many types of the fractional equations. For example, in \cite{al2018high,li2021symmetry}, different versions of the fractional nonlinear Schr\"{o}dinger (NLS) equation have been studied, and soliton type solutions have been found. However, none of these versions of the fractional NLS equation is integrable in the inverse scattering transform (IST) sense. In $2022$, Ablowitz, Been, and Carr proposed the integrable fractional nonlinear soliton equations based on the Riesz fractional derivative and IST integrability, such as the fractional NLS equation, the fractional Korteweg-deVries (KdV) equation, the fractional modified KdV (mKdV) equation, the fractional sine-Gordon  equation and the fractional sinh-Gordon  equation \cite{ablowitz2022fractional,ablowitz2022integrable}. This research process has three key elements: a general evolution equation solvable by the inverse scattering transformation, an anomalous dispersion relation of the fractional equation, and the completeness of the square eigenfunction. Subsequently, fractional higher-order mKdV equation \cite{zhang2022interactions}, fractional higher-order NLS equation \cite{weng2022dynamics}, and so on were studied \cite{yan2022new}. It can be found that all of these above researches
	are related to the $2\times2$ matrix scattering problem, but to our best knowledge, there is little study for
	higher-order matrix scattering problem. In this paper, we expand the fractional integrable equation to
	the $3\times3$ Lax equation and give the fractional coupled Hirota (fcHirota) equation. Then we construct
	the soliton type solutions and give a simple asymptotic analysis.
	
	The NLS equation can describe electromagnetic waves with slowly varying amplitude in nonlinear media \cite{hasegawa1973transmission}. In order to increase the bit rates, the pulse width needs to be reduced. However, the NLS equation becomes insufficient when the pulse length becomes comparable to the wavelength.  Then the additional terms must be added, and the resulting pulse propagation is called the higher-order NLS equation \cite{mollenauer1980experimental}, which includes effects like third-order dispersion, self-steepening, and stimulated Raman scattering. In $1974$, Manakov proposed the coupled NLS equation  \cite{manakov1974theory}. In $1992$, Tasgal and Potasek proposed the coupled higher-order NLS equation, which is the coupled version of the  Hirota equation \cite{tasgal1992soliton}. The coupled Hirota equation has the form 
	\begin{equation}\label{3-cHirota}
		\begin{cases}
			\ii u_{t}+\frac{1}{2}u_{xx}+u(|u|^{2}+|v|^{2})+\ii\varepsilon\big(u_{xxx}+u_{x}(6|u|^{2}+3|v|^{2})+3uv^{*}v_{x}\big)=0,\\[2pt]
			\ii v_{t}\ +\frac{1}{2}v_{xx}+v(|u|^{2}+|v|^{2})+\ii\varepsilon\big(v_{xxx}+v_{x}(6|v|^{2}+3|u|^{2})+3vu^{*}u_{x}\big)=0,
		\end{cases}
	\end{equation}
	where $u(x,t)$ and $v(x,t)$ represent the slowly varying complex amplitudes in two interacting optical modes, $^{*}$ means the complex conjugate, and $\varepsilon$ is a small dimensionless real parameter. When $\varepsilon=0$, the equation \eqref{3-cHirota} is reduced to the coupled NLS equation. The equation \eqref{3-cHirota} can not only simulate the propagation of ultrashort optical pulses in birefringent fiber \cite{porsezian1997optical}, but also describe the collision of two waves caused by severe weather in deep ocean \cite{ankiewicz2010rogue}. Based on the reported results of the equation \eqref{3-cHirota}, such as the Lax pair, the classical Darboux transformation, and so on \cite{bindu2001dark,wang2014generalized}, we explore the fractional form of the coupled Hirota equation.
	
	It should be pointed out that for the coupled Hirota equation, there exist many types of vector solitons, such as bright-bright solitons, bright-dark solitons, dark-dark solitons and so on \cite{wang2014integrability}. From the study on vector solitons of the coupled Hirota equation, we can confirm that there exists a degeneracy in the structure of bright solitons, that is, the solitons in the multi-mode fibers propagate with the same wave numbers. To avoid the degeneracy, different propagation constants can be appropriately introduced in the structure of fundamental bright solitons of the coupled Hirota equation. Thus, a new class of fundamental bright solitons, namely nondegenerate fundamental vector bright solitons, can be obtained. The nondegenerate soliton solution of nonlinear integrable equations is a very interesting and important issue, which had been explored by many scholars recently \cite{qin2019nondegenerate,stalin2021nondegenerate,stalin2022dynamics}. We argue that this phenomenon is equally presented in the fcHirota equation.
	
	The organization of this work is as follows: In Sec.\ref{3-sec.2}, we first define the fcHirota equation from a $3\times 3$ scattering problem by introducing a recursive operator $\mathcal{L}$. According to the linearized equation of the fcHirota equation, the anomalous dispersion relation of the fcHirota equation is deduced. Then we establish a matrix Riemann-Hilbert problem (RHP) for the fcHirota equation by assuming that the potential functions are sufficiently smooth and rapidly tend to zero as $|x|\rightarrow\infty$, and give a constraint condition of the matrix ${\bf V}(\lambda;x,t)$ as $|x|\rightarrow\infty$. Subsequently, we solve the RHP. At the same time, we construct the square eigenfunctions and the adjoint square eigenfunctions, and then use their completeness to derive the explicit form of the fcHirota equation. In Sec.\ref{3-sec.3}, we explore the fractional $n$-soliton solutions and give a simple analysis for the fractional one- and two-soliton solutions. In addition, we investigate the nondegenerate case of the  fractional soliton solutions of the fcHirota equation.

	\section{The IST for the fcHirota equation}\label{3-sec.2}
	In this section, we would like to combine the idea in \cite{ablowitz2022fractional} to derive the integrable fcHirota equation and solve this equation by the IST with the RHP. 
	
	\subsection{The fcHirota equation and anomalous dispersive relation}
	Firstly, we consider the Ablowitz-Kaup-Newell-Segur (AKNS) scattering problem
	\begin{subequations}\label{3-Lax-Phi}
		\begin{equation}\label{3-Lax-Phi-a}
			\Phi_{x}={\bf U}\Phi,
		\end{equation}
		\begin{equation}\label{3-Lax-Phi-b}
			\Phi_{t}={\bf V}\Phi,
		\end{equation}
	\end{subequations}
	where $\Phi(\lambda;x,t)$ is a wave function associated with $\lambda$,  $\lambda\in\mathbb{C}$ is an eigenvalue parameter, and ${\bf U}(\lambda;x,t)$ can be expressed in the following polynomial with respect to $\lambda$
	\begin{equation*}
		\begin{split}
			&{\bf U}=-\ii\lambda\Lambda+{\bf U}_{0},\ \ \ \ \Lambda={\rm diag}\left(1,-1,-1\right),\ \ \ 
			{\bf U}_{0}(x,t)=\begin{bmatrix}
				0 & {\bf q}(x,t)\\[4pt]
				{\bf r}^{\top}(x,t) &0\\
			\end{bmatrix},\\
			&{\bf q}(x,t)=
			\begin{bmatrix}
				q_{1}(x,t)&q_{2}(x,t)
			\end{bmatrix},\ \ \ \ {\bf r}(x,t)=\begin{bmatrix}
				r_{1}(x,t)&r_{2}(x,t)
			\end{bmatrix},
		\end{split}
	\end{equation*}
	the superscript $^{\top}$ denotes the transpose, and 
	\begin{equation*}
		{\bf V}(\lambda;x,t)=\begin{bmatrix}
			{\bf V}_{1}(\lambda;x,t) &{\bf V}_{2}(\lambda;x,t)\\[4pt]
			{\bf V}_{3}^{\top}(\lambda;x,t) &{\bf V}_{4}(\lambda;x,t)
		\end{bmatrix}.
	\end{equation*}
	Note that the matrix ${\bf V}(\lambda;x,t)$ has the same block structure as the matrix ${\bf U}(\lambda;x,t)$. In order to ensure the compatibility of system \eqref{3-Lax-Phi}, the following zero curvature equation needs to be
	satisfied
	\begin{equation}\label{3-ZCE}
		{\bf U}_{t}-{\bf V}_{x}+[{\bf U},{\bf V}]=0,
	\end{equation}
	where the commutator is defined by $[{\bf U},{\bf V}]\equiv {\bf UV-VU}.$
	
	Combining the equation \eqref{3-ZCE} and making appropriate assumptions, the integrable hierarchy associated with the third-order scattering problem can be obtained
	\begin{equation}\label{3-three-scattering}
		\ii
		\begin{bmatrix}
			{\bf q}\\[4pt]
			-{\bf r}
		\end{bmatrix}_{t}=\mathcal{F}(\mathcal{L})
		\begin{bmatrix}
			{\bf q}\\[4pt]
			{\bf r}
		\end{bmatrix},
	\end{equation}
	where $\mathcal{F}(\mathcal{L})$ is a function with respect to the operator $\mathcal{L}$, and here $\mathcal{F}(\mathcal{L})=\mathcal{L}^{n}\ (n=1,2,\cdots)$. The recursion operator $\mathcal{L}$ is
	\begin{equation}
		\begin{split}
			&\mathcal{L}\begin{bmatrix}
				\widetilde{{\bf q}}\\[4pt]
				\widetilde{{\bf r}}
			\end{bmatrix}=\ii
			\begin{bmatrix}
				\widetilde{{\bf q}}_{x}-{\bf q}\partial_{-}^{-1}({\bf r}^{\top}\widetilde{{\bf q}}-\widetilde{{\bf r}}^{\top}{\bf q})-\big(\partial_{-}^{-1}(\widetilde{{\bf q}}{\bf r}^{\top}+{\bf q}\widetilde{{\bf r}}^{\top})\big){\bf q}\\[4pt]
				-\widetilde{{\bf r}}_{x}-{\bf r}\partial_{-}^{-1}(\widetilde{{\bf q}}^{\top}{\bf r}-{\bf q}^{\top}\widetilde{{\bf r}})+\big(\partial_{-}^{-1}({\bf r}\widetilde{{\bf q}}^{\top}+\widetilde{{\bf r}}{\bf q}^{\top})\big){\bf r}
			\end{bmatrix},\ \ \ \ \ \ \ 
			\begin{split}
				\widetilde{{\bf q}}=&\begin{bmatrix}
					\widetilde{q}_{1}&\widetilde{q}_{2}
				\end{bmatrix},\\[4pt]
				\widetilde{{\bf r}}=&\begin{bmatrix}
					\widetilde{r}_{1}&\ \widetilde{r}_{2}
				\end{bmatrix},
			\end{split}
		\end{split}
	\end{equation}
	where $\partial_{-}^{-1}=\int_{-\infty}^{x}\dd y$. The adjoint operator of $\mathcal{L}$ is defined as
	\begin{equation}
		\mathcal{L}^{A}
		\begin{bmatrix}
			\widetilde{{\bf q}}\\[4pt]
			\widetilde{{\bf r}}
		\end{bmatrix}=\ii
		\begin{bmatrix}
			-\widetilde{{\bf q}}_{x}-{\bf r}\partial_{+}^{-1}({\bf q}^{\top}\widetilde{{\bf q}}+\widetilde{{\bf r}}^{\top}{\bf r})-\big(\partial_{+}^{-1}(\widetilde{{\bf q}}{\bf q}^{\top}+{\bf r}\widetilde{{\bf r}}^{\top})\big){\bf r}\\[4pt]
			\widetilde{{\bf r}}_{x}+{\bf q}\partial_{+}^{-1}(\widetilde{{\bf q}}^{\top}{\bf q}+{\bf r}^{\top}\widetilde{{\bf r}})-\big(\partial_{+}^{-1}({\bf q}\widetilde{{\bf q}}^{\top}+\widetilde{{\bf r}}{\bf r}^{\top})\big){\bf q}
		\end{bmatrix},
	\end{equation}
	where $\partial_{+}^{-1}=\int_{x}^{+\infty}\dd y$.
	In this case, given the value of $n$ and the symmetry relation between ${\bf q}(x,t)$ and ${\bf r}(x,t)$, the corresponding integer order integrable equation can be obtained. In order to introduce the fractional integrable equations, we need to make some appropriate changes to the function $\mathcal{F}(\mathcal{L})$. That is to say that we need to add the fractional power of the operator $\mathcal{L}$ to $\mathcal{F}(\mathcal{L})$. In this paper, we take $\mathcal{F}(\mathcal{L})=(\beta+\alpha\mathcal{L})\mathcal{L}^{2}|\mathcal{L}^{2}|^{\epsilon}$ with $\epsilon\in[0,1)$, $\alpha,\beta\in\mathbb{R}$, then
	\begin{equation}\label{3-fcHirota-1}
		\begin{bmatrix}
			{\bf q}\\
			{\bf r}
		\end{bmatrix}_{t}=\sigma_{3}|\mathcal{L}^{2}|^{\epsilon}
		\begin{bmatrix}
			\ii\beta({\bf q}_{xx}-2{\bf q}{\bf r}^{\top}{\bf q})-\alpha({\bf q}_{xxx}-3{\bf q}_{x}{\bf r}^{\top}{\bf q}-3{\bf q}{\bf r}^{\top}{\bf q}_{x})\\[4pt]
			\ii\beta({\bf r}_{xx}-2{\bf r}{\bf q}^{\top}{\bf r})+\alpha({\bf r}_{xxx}-3{\bf r}_{x}{\bf q}^{\top}{\bf r}-3{\bf r}{\bf q}^{\top}{\bf r}_{x})
		\end{bmatrix},
	\end{equation}
	where $\sigma_{3}={\rm diag}(1,-1)$. 
	In particular, when $\beta=0$, ${\bf r}^{\top}=\sigma{\bf q}^{\dagger},\ \sigma=\pm 1$, we can derive the fractional coupled complex mKdV  equation, and when $\alpha=0$, ${\bf r}^{\top}=\sigma{\bf q}^{\dagger},\ \sigma=\pm 1$, we can derive the fractional coupled NLS equation. Without loss of generality, we take $\beta=\frac{1}{2}$, ${\bf r}^{\top}=-{\bf q}^{\dagger}$, then we can get the  fcHirota equation which can be regarded as the fractional type to the coupled Hirota equation \eqref{3-cHirota}. Under this case, the corresponding operator $\mathcal{F}(\mathcal{L})$ becomes $\mathcal{F}_{h}(\mathcal{L})=(\frac{1}{2}+\alpha\mathcal{L})\mathcal{L}^{2}|\mathcal{L}^{2}|^{\epsilon}$. The explicit form of the fcHirota equation will be given in subsection \ref{sec.2.3}.
	
	Considering the linearization of the fcHirota equation \eqref{3-fcHirota-1} (i.e. when $\beta=\frac{1}{2},\ {\bf r}^{\top}=-{\bf q}^{\dagger}$)
	\begin{equation}\label{3-fcHirota-1-linera} 
		\ii{\bf q}_{t}+|-\partial^{2}|^{\epsilon}\Big(\frac{1}{2}{\bf q}_{xx}+\ii\alpha{\bf q}_{xxx}\Big)=0,
	\end{equation}
	where $|-\partial^{2}|^{\epsilon}$ is the Riesz fractional derivative \cite{ablowitz2022fractional}.
	We substitute the formal solution
	\begin{equation}
		{\bf q}=\begin{bmatrix}
			q_{1}&q_{2}
		\end{bmatrix}\thicksim
		\begin{bmatrix}
			\ee^{\ii\big(k_{1} x-\omega_{1}(k_{1})t\big)}&\ee^{\ii\big(k_{2} x-\omega_{2}(k_{2})t\big)}
		\end{bmatrix}
	\end{equation}
	into the equation \eqref{3-fcHirota-1-linera}, then the anomalous dispersive relations can be obtained 
	\begin{equation}
		\omega_{j}(k_{j})=\mathcal{F}_{h}(-k_{j})=(\frac{1}{2}-\alpha k_{j})k_{j}^{2}|k_{j}^{2}|^{\epsilon},\ \ \ j=1,2,
	\end{equation}
	in which we have $\mathcal{F}_{h}(k_{j})=\omega_{j}(-k_{j})$.
	
	\subsection{The direct scattering}	
	We assume that the potential ${\bf q}(x,t)$ is sufficiently smooth and rapidly tends to zero as $|x|\to\infty$. In integer order integrable equations, the matrix ${\bf V}(\lambda;x,t)$ can be expressed as a matrix function concerning $\lambda$, while for the fractional  integrable equations in the sense of this paper, the matrix ${\bf V}(\lambda;x,t)$ usually cannot be given precisely. Nevertheless, here we need to give the constraints \cite{weng2022dynamics}
	\begin{equation}
		\begin{cases}
			{\bf V}_{j}(\lambda;x,t)\to 0,\ \ j=1,2,3,\\[4pt]
			{\bf V}_{4}(\lambda;x,t)\to \ii\mathcal{F}_{h}(2\lambda)\mathbb{I}_{2},
		\end{cases}
		\ \ |x|\to\infty.
	\end{equation}
	Therefore, we can derive the asymptotic behavior
	\begin{equation}\label{3-Phi-asy-be}
		\Phi^{\pm}(\lambda;x,t)\sim \ee^{-\ii\lambda\Lambda x+\ii\mathcal{F}_{h}(2\lambda)\Omega t},\ \ |x|\to\infty,
	\end{equation}
	where $\Omega={\rm diag}(0,1,1)$,  $\Phi^{\pm}(\lambda;x,t)=
	\begin{bmatrix}
		\phi^{\pm}_{1},&\phi^{\pm}_{2},&\phi^{\pm}_{3}
	\end{bmatrix}$, and the superscripts $^{\pm}$ refer to the cases $x\to\pm\infty$. Then the modified Jost solutions $\Psi^{\pm}(\lambda;x,t)$ can be defined as
	\begin{equation}\label{3-Psi-def}
		\Psi^{\pm}(\lambda;x,t)=\Phi^{\pm}(\lambda;x,t) \ee^{\ii\lambda\Lambda x-\ii\mathcal{F}_{h}(2\lambda)\Omega t}\rightarrow \mathbb{I}_{3},\ \ \ |x|\to\infty.
	\end{equation}
	And we can get an equation of $\Psi(\lambda;x,t)$, which is equivalent to the equation \eqref{3-Lax-Phi-a}
	\begin{equation}\label{3-Psi}
		\Psi_{x}=-\ii\lambda[\Lambda,\Psi]+{\bf U}_{0}\Psi.
	\end{equation}
	In addition, the Volterra integral equations for $\Psi^{\pm}(\lambda;x,t)$ can be given as
	\begin{equation}\label{3-Psi-Volterra}
		\Psi^{\pm}(\lambda;x)=\mathbb{I}_{3}+\int_{\pm\infty}^{x}\ee^{-\ii\lambda{\rm ad}\widehat{\Lambda}(x-y)}{\bf U}_{0}(y)\Psi^{\pm}(\lambda;y)dy,
	\end{equation}
	where $\ee^{{\rm ad}\widehat{\Lambda}}{\bf A}=\ee^{\Lambda}{\bf A}\ee^{-\Lambda},$ $\Psi^{\pm}(\lambda;x,t)=\begin{bmatrix}
		\psi^{\pm}_{1},&\psi^{\pm}_{2},&\psi^{\pm}_{3}
	\end{bmatrix}$.
	Let $\mathbb{C^{+}}=\{\lambda|\Im\lambda>0\},\ \mathbb{C^{-}}=\{\lambda|\Im\lambda<0\}.$  Based on the equation \eqref{3-Psi-Volterra}, we can find that the first column of $\Psi^{+}$ (i.e., $\psi^{+}_{1}$) only contains the exponential factor $\ee^{2\ii\lambda(x-y)}$, which decays when $\lambda\in\mathbb{C^{-}}$ due to the condition $y>x$ in the integral. Similarly, the second and the third columns (i.e., $\psi^{+}_{2},\ \psi^{+}_{3}$) decay in the region $\lambda\in\mathbb{C^{+}}$. We can analyze the analyticity of $\Psi^{-}$ in the same way, and then we summarize the above analysis as follows:  
	for any $x,t\in\mathbb{R}$, $\psi_{1}^{-},\ \psi_{2}^{+},\ \psi_{3}^{+}$ are analytic for $\lambda\in\mathbb{C}^{+}$, while $\psi_{1}^{+},\ \psi_{2}^{-},\ \psi_{3}^{-}$ are analytic for $\lambda\in\mathbb{C}^{-}$.
	
	From above, we know $\Phi^{\pm}(\lambda;x,t)$ can be regarded as the fundamental solution of the equation \eqref{3-Lax-Phi}. By the theory of ordinary differential equations, there exists a matrix ${\bf S}(\lambda)=\left(s_{ij}(\lambda)\right)_{i,j=1,2,3}$ between them obeying the relation
	\begin{equation}\label{3-Phi-S}
		\Phi^{-}=\Phi^{+}{\bf S}(\lambda),\ \ \ \ \lambda\in\mathbb{R},
	\end{equation}
	and the matrix ${\bf S}(\lambda)$ is called the scattering matrix. 
	Combined with the equation \eqref{3-Psi}, the equation \eqref{3-Phi-S} can be rewritten as
	\begin{equation}\label{3-Psi-S}
		\Psi^{-}=\Psi^{+}\ee^{-\ii\lambda{\rm ad}\widehat{\Lambda} x+\ii\mathcal{F}_{h}(2\lambda){\rm ad}\widehat{\Omega} t}{\bf S}(\lambda),\ \ \ \lambda\in\mathbb{R},
	\end{equation}
	where $\ee^{{\rm ad}\widehat{\Omega}}{\bf A}=\ee^{\Omega}{\bf A}\ee^{-\Omega}.$ 
	Now we assume $\det\Phi^{\pm}=\mu(\lambda)$ which implies $\det {\bf S}(\lambda)=1$. Using \eqref{3-Phi-S},
	\begin{equation}
		s_{1j}(\lambda)=\frac{1}{\mu}\big|\phi_{j}^{-},\phi_{2}^{+},\phi_{3}^{+}\big|,\ \ s_{2j}(\lambda)=\frac{1}{\mu}\big|\phi_{1}^{+},\phi_{j}^{-},\phi_{3}^{+}\big|,\ \ s_{3j}(\lambda)=\frac{1}{\mu}\big|\phi_{j}^{-},\phi_{1}^{+},\phi_{2}^{+}\big|,\ \ \ \ j=1,2,3,
	\end{equation}
	and we can find that $s_{11}(\lambda)$ is analytic in $\mathbb{C^{+}}$, $\{s_{jk}(\lambda)\}_{j,k=2,3}$ are analytic in $\mathbb{C^{-}}$.  In general, $s_{12}(\lambda),\ s_{13}(\lambda)$, $s_{21}(\lambda),\ s_{31}(\lambda)$ cannot be extended off the
	real $\lambda$-axis.

	\subsection{The inverse scattering}
	In what follows, we shall formulate the RHP using the properties of $\Psi^{\pm}(\lambda;x,t)$. To
	this end, we combine the ideas in \cite{yang2010nonlinear} to construct two matrix functions ${\bf P}^{\pm}(\lambda;x,t)$ which are analytic in $\mathbb{C}^{\pm}$, respectively,
	\begin{equation}
		\begin{split}
			&{\bf P}^{+}(\lambda;x,t)=\begin{bmatrix}
				\psi_{1}^{-}&\psi_{2}^{+}&\psi_{3}^{+}
			\end{bmatrix}=\Psi^{-}{\bf H}_{1}+\Psi^{+}{\bf H}_{2},\\
			&{\bf P}^{-}(\lambda;x,t)=\begin{bmatrix}
				\widetilde{\psi}^{-}_{1}&\widetilde{\psi}^{+}_{2}&\widetilde{\psi}^{+}_{3}
			\end{bmatrix}^{\top}={\bf H}_{1}\widetilde{\Psi}^{-}+{\bf H}_{2}\widetilde{\Psi}^{+},
		\end{split}
	\end{equation}
	where 
	\begin{equation}
		{\bf H}_{1}={\rm diag}(1,0,0),\ \ \ \ {\bf H}_{2}={\rm diag}(0,1,1),\ \ \ 
		\widetilde{\Psi}^{\pm}=(\Psi^{\pm})^{-1}=\begin{bmatrix}
			\widetilde{\psi}^{\pm}_{1}&\widetilde{\psi}^{\pm}_{2}&\widetilde{\psi}^{\pm}_{3}
		\end{bmatrix}^{\top}.
	\end{equation}
	There are some properties of ${\bf P}^{\pm}(\lambda;x,t)$ which can be derived through a series of complex analyses,
	and we summarize them in the proposition \ref{3-prop-1}.
	\begin{prop}\label{3-prop-1}
		The matrix functions ${\bf P}^{\pm}(\lambda;x,t)$ have the following properties:\\
		$(1)$\ $\det {\bf P}^{+}(\lambda;x,t)=s_{11}(\lambda),\ \det {\bf P}^{-}(\lambda;x,t)=\widehat{s}_{11}(\lambda),$ where ${\bf S}^{-1}(\lambda)=\big(\widehat{s}_{jk}(\lambda)\big)_{j,k=1,2,3}$.\\[4pt]
		$(2)$\ $\lim\limits_{x\to+\infty}{\bf P}^{+}(\lambda;x,t)=\begin{bmatrix}
			s_{11}(\lambda)&0\\[4pt]
			0&\mathbb{I}_{2}
		\end{bmatrix},\ \lambda\in\mathbb{C^{+}},$\ \ \ \ $\lim\limits_{x\to-\infty}{\bf P}^{-}(\lambda;x,t)=\begin{bmatrix}
			\widehat{s}_{11}(\lambda)&0\\[4pt]
			0&\mathbb{I}_{2}
		\end{bmatrix},\ \lambda\in\mathbb{C^{-}}.$\\[4pt]
		$(3)$\ ${\bf P}^{\pm}(\lambda;x,t)\to\mathbb{I}_{3},\ \ \lambda\in\mathbb{C}^{\pm}$.
	\end{prop}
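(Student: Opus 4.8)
The plan is to read the three items off from the construction of ${\bf P}^{\pm}$, the scattering relation \eqref{3-Psi-S}, and the normalization $\Psi^{\pm}(\lambda;x,t)\to\mathbb{I}_{3}$ as $x\to\pm\infty$. Abbreviate $E=E(\lambda;x,t):=\ee^{-\ii\lambda\Lambda x+\ii\mathcal{F}_{h}(2\lambda)\Omega t}$, which is diagonal; then \eqref{3-Psi-S} reads $\Psi^{-}=\Psi^{+}E{\bf S}E^{-1}$, so that $(\Psi^{+})^{-1}\Psi^{-}=E{\bf S}E^{-1}$ and $(\Psi^{-})^{-1}\Psi^{+}=E{\bf S}^{-1}E^{-1}$, and the two matrices factor as
\[
{\bf P}^{+}=\Psi^{+}\bigl(E{\bf S}E^{-1}{\bf H}_{1}+{\bf H}_{2}\bigr),\qquad {\bf P}^{-}=\bigl({\bf H}_{1}E{\bf S}^{-1}E^{-1}+{\bf H}_{2}\bigr)(\Psi^{+})^{-1}.
\]

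For item $(1)$ I would first observe that $\det\Psi^{\pm}$ is $x$-independent: by \eqref{3-Psi} the commutator term $-\ii\lambda[\Lambda,\Psi^{\pm}]$ contributes nothing to ${\rm tr}\bigl((\Psi^{\pm})^{-1}\Psi^{\pm}_{x}\bigr)$ and ${\rm tr}\,{\bf U}_{0}=0$, whence $\partial_{x}\det\Psi^{\pm}=0$; the normalization at $x\to\pm\infty$ then fixes $\det\Psi^{\pm}\equiv1$, and \eqref{3-Phi-S} forces $\det{\bf S}=1$, so $\widehat{s}_{11}$ is well defined. Taking determinants in the two factorizations, the factors $\Psi^{+}$ and $(\Psi^{+})^{-1}$ drop out, and—because $E$ is diagonal and ${\bf H}_{1}={\rm diag}(1,0,0)$, ${\bf H}_{2}={\rm diag}(0,1,1)$—the remaining factors are triangular, with determinant equal to the $(1,1)$ entry: $\det{\bf P}^{+}=(E{\bf S}E^{-1})_{11}=s_{11}$ and $\det{\bf P}^{-}=(E{\bf S}^{-1}E^{-1})_{11}=\widehat{s}_{11}$.

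For item $(2)$ I would pass to the limit $x\to+\infty$ in ${\bf P}^{+}=\Psi^{-}{\bf H}_{1}+\Psi^{+}{\bf H}_{2}$: the columns $\psi_{2}^{+},\psi_{3}^{+}$ converge to the second and third columns of $\mathbb{I}_{3}$, while $\psi_{1}^{-}$ is the first column of $\Psi^{+}E{\bf S}E^{-1}$, whose lower two entries carry the factor $\ee^{2\ii\lambda x}$ that decays on $\mathbb{C}^{+}$; using $\Psi^{+}\to\mathbb{I}_{3}$ this leaves $\psi_{1}^{-}\to(s_{11},0,0)^{\top}$ and hence ${\bf P}^{+}\to{\rm diag}(s_{11},\mathbb{I}_{2})$. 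The assertion for ${\bf P}^{-}$ follows by the mirror computation on ${\bf P}^{-}={\bf H}_{1}\widetilde{\Psi}^{-}+{\bf H}_{2}\widetilde{\Psi}^{+}$, using the companion identity for $(\Psi^{-})^{-1}$ and the decay of $\ee^{-2\ii\lambda x}$ on $\mathbb{C}^{-}$: the off-diagonal scattering entries are suppressed and only $\widehat{s}_{11}$ survives in the corner, giving ${\bf P}^{-}\to{\rm diag}(\widehat{s}_{11},\mathbb{I}_{2})$.

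For item $(3)$ I would invoke the large-$\lambda$ behaviour of the modified Jost functions: iterating the Volterra equations \eqref{3-Psi-Volterra}, whose kernels carry $\ee^{\pm2\ii\lambda(x-y)}$, and using that ${\bf q}$ is smooth and rapidly decaying (Riemann--Lebesgue, or one integration by parts for the rate $\mathcal{O}(\lambda^{-1})$) gives $\Psi^{\pm}(\lambda;x,t)\to\mathbb{I}_{3}$ as $\lambda\to\infty$ in the pertinent half-plane; since $\det\Psi^{\pm}=1$ we also have $\widetilde{\Psi}^{\pm}=(\Psi^{\pm})^{-1}={\rm adj}\,\Psi^{\pm}\to\mathbb{I}_{3}$, and as ${\bf P}^{\pm}$ are built from columns of $\Psi^{\pm}$ and rows of $\widetilde{\Psi}^{\pm}$, the claim ${\bf P}^{\pm}\to\mathbb{I}_{3}$ in $\mathbb{C}^{\pm}$ follows. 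I expect this asymptotic step to be the principal technical point—establishing uniform convergence of the Neumann series of \eqref{3-Psi-Volterra} in the appropriate half-plane and the large-$\lambda$ decay of the integral terms (for $\Psi^{\pm}$ and, through the reduction symmetry ${\bf r}^{\top}=-{\bf q}^{\dagger}$, for its inverse) is where the regularity and decay hypotheses on ${\bf q}$ are genuinely used; items $(1)$ and $(2)$ are algebraic once \eqref{3-Psi-S} and the $x\to\pm\infty$ normalizations are in hand, the only delicate bookkeeping being to pair each exponential $\ee^{\pm2\ii\lambda x}$ with the half-plane in which it decays.
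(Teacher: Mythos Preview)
The paper does not supply a proof of this proposition---it simply states the three items and remarks that they ``can be derived through a series of complex analyses,'' citing \cite{yang2010nonlinear}. Your argument is therefore not comparable to any argument in the paper; rather, it fills in the omitted details. Items $(1)$ and $(3)$ are handled correctly and by the standard route: Abel/Liouville for $\det\Psi^{\pm}\equiv1$, the block-triangular factorizations of ${\bf P}^{\pm}$ through $\Psi^{+}$, and Neumann-series asymptotics of the Volterra equations for the large-$\lambda$ normalization.

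There is one genuine bookkeeping slip in item $(2)$ for ${\bf P}^{-}$. Your ``mirror computation'' invokes the decay of $\ee^{-2\ii\lambda x}$ on $\mathbb{C}^{-}$; but for $\Im\lambda<0$ this factor decays as $x\to+\infty$, not as $x\to-\infty$. If you actually send $x\to-\infty$ (as the proposition states), then $\widetilde{\Psi}^{-}\to\mathbb{I}_{3}$ gives the first row $(1,0,0)$, while the second and third rows of $\widetilde{\Psi}^{+}=E{\bf S}E^{-1}\widetilde{\Psi}^{-}$ tend to $(0,s_{22},s_{23})$ and $(0,s_{32},s_{33})$; hence
\[
\lim_{x\to-\infty}{\bf P}^{-}(\lambda;x,t)=\begin{bmatrix}1&0\\ 0&(s_{jk})_{j,k=2,3}\end{bmatrix},
\]
whose determinant is indeed $s_{22}s_{33}-s_{23}s_{32}=\widehat{s}_{11}$, consistent with item $(1)$, but which is not the matrix ${\rm diag}(\widehat{s}_{11},\mathbb{I}_{2})$ asserted in the proposition. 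The block ${\rm diag}(\widehat{s}_{11},\mathbb{I}_{2})$ is exactly what one obtains at $x\to+\infty$ (using $\widetilde{\Psi}^{+}\to\mathbb{I}_{3}$ and the first row of $E{\bf S}^{-1}E^{-1}$), which is what your decay argument implicitly computes. So either the proposition contains a misprint in the direction of the limit for ${\bf P}^{-}$, or the intended limit matrix is the block form above; in either reading your scheme is sound, but you should flag the discrepancy rather than claim the ``mirror'' argument reproduces the statement verbatim.
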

	Moreover, from the fact that ${\bf U}_{0}^{\dagger}(x,t)=-{\bf U}_{0}(x,t)$, we have $\Psi^{\dagger}(\lambda^{*};x,t)=\Psi^{-1}(\lambda;x,t)$. Then the symmetry relation of ${\bf S}(\lambda)$ can be obtained by combining the equation \eqref{3-Psi-S}
	\begin{equation}\label{3-sym-S}
		{\bf S}^{\dagger}(\lambda^{*})={\bf S}^{-1}(\lambda),
	\end{equation}
	which implies $s^{*}_{11}(\lambda^{*})=\widehat{s}_{11}(\lambda),\ \lambda\in\mathbb{C^{-}}$.	Note that the inverse of the solution of the adjoint matrix spectral problem is exactly the solution of the matrix spectral problem. So we can define a sectional analytic solution ${\bf M}(\lambda;x,t)$
	\begin{equation}
		\begin{split}
			&{\bf M}^{+}(\lambda;x,t)={\bf P}^{+}(\lambda;x,t),\ \ \ \ \ \ \ \ \ \  \lambda\in\mathbb{C^{+}},\\
			&{\bf M}^{-}(\lambda;x,t)=\big({\bf P}^{-}(\lambda;x,t)\big)^{-1},\ \ \  \lambda\in\mathbb{C^{-}}.
		\end{split}
	\end{equation}
	Then we can formulate the following RHP.\\
	{\bf Riemann-Hilbert problem $1$.\ }We can find a matrix ${\bf M}(\lambda;x,t)$ with the following properties:
	\begin{itemize}
		\item ${\bf Analyticity:}$\ ${\bf M}^{\pm}(\lambda;x,t)$ are analytic in $\mathbb{C}^{\pm}$, respectively.
		\item ${\bf Jump\ condition:}$\ The two matrix functions ${\bf M}^{\pm}(\lambda;x,t)$ can be linked by
		\begin{equation*}
			{\bf M}^{+}(\lambda;x,t)={\bf M}^{-}(\lambda;x,t){\bf J}(\lambda;x,t),
		\end{equation*}
		where the matrix ${\bf J}(\lambda;x,t)$ is
		\begin{equation*}
			\begin{split}
				{\bf J}(\lambda;x,t)&=\ee^{-\ii\lambda{\rm ad}\widehat{\Lambda} x+\ii\mathcal{F}_{h}(2\lambda){\rm ad}\widehat{\Omega} t}\big(({\bf H}_{1}{\bf S}^{-1}+{\bf H}_{2})({\bf S}{\bf H}_{1}+{\bf H}_{2})\big)\\[4pt]
				&=\begin{bmatrix}
					1&\ee^{-\ii(2\lambda x+\mathcal{F}_{h}(2\lambda)t)}\widehat{s}_{12}&\ee^{-\ii(2\lambda x+\mathcal{F}_{h}(2\lambda)t)}\widehat{s}_{13}\\[4pt]
					\ee^{\ii(2\lambda x+\mathcal{F}_{h}(2\lambda)t)}s_{21}&1&0\\[4pt]
					\ee^{\ii(2\lambda x+\mathcal{F}_{h}(2\lambda)t)}s_{31}&0&1
				\end{bmatrix}.
			\end{split}
		\end{equation*}
		\item ${\bf Normalization:}$\ ${\bf M}(\lambda;x,t)\rightarrow\mathbb{I}_{3}$, as $\lambda\to \infty$.
	\end{itemize}
	
	The RHP with zeros generates soliton solutions. The uniqueness of the RHP $1$ does not hold unless the zeros of $\det {\bf P}^{\pm}(\lambda;x,t)$ in $\mathbb{C}^{\pm}$ are specified and the kernel structures of ${\bf P}^{\pm}(\lambda;x,t)$ at these zeros are determined. In view of \eqref{3-sym-S}, it is clear that if $\lambda$ is a zero of $s_{11}(\lambda)$, $\lambda^{*}$ will be a zero of $\widehat{s}_{11}(\lambda)$. Based on these results, we suppose that $s_{11}(\lambda)$ has $n$ simple zeros $\{\lambda_{k}\}_{1}^{n}$. Accordingly, each zero of $\widehat{s}_{11}(\lambda)$ is complex conjugate to that
	of $s_{11}(\lambda)$, and we define them by $\{\widehat{\lambda}_{j}\}_{1}^{n}$. Under this assumption, the kernel of ${\bf P}^{+}(\lambda;x,t)$ contains a single column vector $|v_{k}\rangle=\begin{bmatrix}
		v_{1k},&v_{2k},&v_{3k}
	\end{bmatrix}^{\top}$ and the kernel of ${\bf P}^{-}(\lambda;x,t)$ contains a single row vector $\langle\widehat{v}_{j}|=\begin{bmatrix}
		\widehat{v}_{j1},&\widehat{v}_{j2},&\widehat{v}_{j3}
	\end{bmatrix}$,
	\begin{equation}\label{3-P-kernel}
		{\bf P}^{+}(\lambda_{k};x,t)|v_{k}\rangle=0,\ \ \ \ \langle\widehat{v}_{j}|{\bf P}^{-}(\widehat{\lambda}_{j};x,t)=0,\ \ \ \ j,k=1,2,\cdots,n.
	\end{equation}
	Then we should eliminate the zero structures of ${\bf P}^{\pm}(\lambda;x,t)$ at $\lambda_{k},\ \widehat{\lambda}_{j}$ to ensure that the RHP $1$ can be solved via the Plemelj formula. Combining the idea in \cite{yang2010nonlinear}, we assume
	\begin{equation}\label{3-G}
		{\bf G}^{+}(\lambda;x,t)=\mathbb{I}_{3}-{\bf v}{\bf N}^{-1}\widehat{\Gamma}\widehat{{\bf v}},\ \ \ \ \ ({\bf G}^{-}(\lambda;x,t))^{-1}=\mathbb{I}_{3}+{\bf v}\Gamma{\bf N}^{-1}\widehat{{\bf v}},
	\end{equation}
	where ${\bf N}(\lambda;x,t)$ is a matrix of order $n$ with its $(j,k)$-th element given by $n_{jk}=\frac{\langle\widehat{v}_{j}|v_{k}\rangle}{\lambda_{k}-\widehat{\lambda}_{j}},$ and
	\begin{equation*}
		\begin{split}
			&{\bf v}=\big(|v_{1}\rangle,\cdots,|v_{n}\rangle\big),\ \ \ \
			\Gamma={\rm diag}\big((\lambda-\lambda_{1})^{-1},\cdots,(\lambda-\lambda_{n})^{-1}\big),\\
			&\widehat{{\bf v}}=\big(\langle\widehat{v}_{1}|,\cdots,\langle\widehat{v}_{n}|\big)^{\top},\ \
			\widehat{\Gamma}={\rm diag}\big((\lambda-\widehat{\lambda}_{1})^{-1},\cdots,(\lambda-\widehat{\lambda}_{n})^{-1}\big).
		\end{split}
	\end{equation*}
	Then there is ${\bf G}^{+}(\lambda;x,t)({\bf G}^{-}(\lambda;x,t))^{-1} =\mathbb{I}_{3}.$ From the above, we can construct two matrix functions
	\begin{equation}
		\widetilde{{\bf M}}^{+}(\lambda;x,t)={\bf M}^{+}(\lambda;x,t)\big({\bf G}^{+}(\lambda;x,t)\big)^{-1},\ \ \ \ \  \widetilde{{\bf M}}^{-}(\lambda;x,t)={\bf M}^{-}(\lambda;x,t)\big({\bf G}^{-}(\lambda;x,t)\big)^{-1}.
	\end{equation}
	Then a new RHP can be constructed.\\
	{\bf Riemann-Hilbert problem $2$.\ }The matrix $\widetilde{{\bf M}}(\lambda;x,t)$ has the following properties:
	\begin{itemize}
		\item ${\bf Analyticity:}$\ $\widetilde{{\bf M}}^{\pm}(\lambda;x,t)$ is analytic in $\mathbb{C}^{\pm},$ respectively.
		\item ${\bf Jump\ condition:}$\ When $\lambda\in\mathbb{R}$, the jump relation between $\widetilde{{\bf M}}^{\pm}(\lambda;x,t)$ is as follows
		\begin{equation*}
			\widetilde{{\bf M}}^{+}(\lambda;x,t)=\widetilde{{\bf M}}^{-}(\lambda;x,t)\  \widetilde{{\bf J}}(\lambda;x,t),
		\end{equation*}
		where the jump matrix is
		\begin{equation*}
			\widetilde{{\bf J}}(\lambda;x,t)=	{\bf G}^{-}(\lambda;x,t){\bf J}(\lambda;x,t)\big({\bf G}^{+}(\lambda;x,t)\big)^{-1}.
		\end{equation*}
		\item ${\bf Normalization:}$\ $\widetilde{{\bf M }}(\lambda;x,t)\rightarrow\mathbb{I}_{3}$, as $\lambda\to \infty$.
	\end{itemize}
	Then the solution to the RHP $2$ can be obtained by applying the Plemelj formula
	\begin{equation}\label{3-sol-ir-re-RHP}
		\widetilde{\bf M}(\lambda)=\mathbb{I}_{3}+\frac{1}{2\pi \ii}\int_{\mathbb{R}}\frac{\widetilde{\bf M}^{-}(\xi)\left({\bf G}^{-}(\xi)({\bf J}(\xi)-\mathbb{I})({\bf G}^{+}(\xi))^{-1}\right)}{\xi-\lambda}d\xi.
	\end{equation}
	Thus, when $\lambda\rightarrow+\infty,$
	\begin{equation}\label{3-sol-T}
		{\bf M}^{+}(\lambda)= \mathbb{I}_{3}-\frac{1}{\lambda}\left(\frac{1}{2\pi \ii}\int_{\mathbb{R}}\widetilde{\bf M}^{-}(\xi)\Big({\bf G}^{-}(\xi)({\bf J}(\xi)-\mathbb{I})({\bf G}^{+}(\xi))^{-1}\Big)d\xi+{\bf v}{\bf N}^{-1}\widehat{{\bf v}}
		\right)+\mathcal{O}(\lambda^{-2}).
	\end{equation}
	To recover the potential functions $q_{1}(x,t)$ and $q_{2}(x,t)$, we expand ${\bf M}^{+}(\lambda;x,t)$ at large-$\lambda$ as
	\begin{equation}\label{3-M-expand}
		{\bf M}^{+}(\lambda;x,t)=\mathbb{I}_{3}+\frac{1}{\lambda}\ {\bf M}^{+}_{1}(x,t)+\frac{1}{\lambda^{2}}\ {\bf M}^{+}_{2}(x,t)+\cdots.
	\end{equation}
	Then the potential functions can be recovered by substituting \eqref{3-M-expand} 
	into \eqref{3-Psi} and comparing the coefficients of both sides of the equations to the same power of $\lambda$. From
	the coefficient of the highest power of $\lambda$, we can obtain
	\begin{equation}
		{\bf U}_{0}=\ii[\Lambda,{\bf M}^{+}_{1}]=\lim\limits_{\lambda\rightarrow+\infty}\ii\lambda[\Lambda,{\bf M}^{+}].
	\end{equation}
	Therefore
	\begin{equation}\label{3-sol-form-q}
		q_{1}(x,t)=2\ii\big({\bf M}^{+}_{1}(x,t)\big)_{12},\ \ \ \ q_{2}(x,t)=2\ii\big({\bf M}^{+}_{1}(x,t)\big)_{13}.
	\end{equation}

	\subsection{The explicit form of the fcHirota equation}\label{sec.2.3}
	Based on the idea in \cite{kaup1976closure,yang2010nonlinear}, we can construct the squared eigenfunctions $\{{\bf Z}_{1}^{\pm}(\lambda;x,t),\ {\bf Z}_{2}^{\pm}(\lambda;x,t)\}$ and the adjoint squared eigenfunctions $\{\Omega_{1}^{\pm}(\lambda;x,t),\ \Omega_{2}^{\pm}(\lambda;x,t)\}$, which are also the eigenfunctions of the recursion operator $\mathcal{L}$ and its adjoint $\mathcal{L}^{A}$, respectively,
	\begin{equation}
		\mathcal{L}{\bf Z}_{j}^{\pm}=2\lambda {\bf Z}_{j}^{\pm},\ \ \ \ \mathcal{L}^{A}\Omega_{j}^{\pm}=2\lambda \Omega_{j}^{\pm},\ \ j=1,2,
	\end{equation}
	where
	\begin{equation}\label{3-squared-eigenfunction}
		\begin{split}
			&{\bf Z}_{j}^{+}=\begin{bmatrix}
				\phi_{1,j+1}^{+}\ \widetilde{\phi}_{12}^{+}\\[4pt]
				\phi_{1,j+1}^{+}\ \widetilde{\phi}_{13}^{+}\\[4pt]
				\phi_{2,j+1}^{+}\ \widetilde{\phi}_{11}^{+}\\[4pt]
				\phi_{3,j+1}^{+}\ \widetilde{\phi}_{11}^{+}
			\end{bmatrix},\ \ 
			{\bf Z}_{j}^{-}=\begin{bmatrix}
				\phi_{11}^{+}\ \widetilde{\phi}_{j+1,2}^{+}\\[4pt]
				\phi_{11}^{+}\ \widetilde{\phi}_{j+1,3}^{+}\\[4pt]
				\phi_{21}^{+}\ \widetilde{\phi}_{j+1,1}^{+}\\[4pt]
				\phi_{31}^{+}\ \widetilde{\phi}_{j+1,1}^{+}
			\end{bmatrix},\ \ \ 
			\Omega_{j}^{+}=\begin{bmatrix}
				\ \ \ \widehat{\varphi}_{j1}\  \phi_{21}^{-}\\[4pt]
				\ \ \ \widehat{\varphi}_{j1}\  \phi_{31}^{-}\\[4pt]
				-\widehat{\varphi}_{j2}\  \phi_{11}^{-}\\[4pt]
				-\widehat{\varphi}_{j3}\ \phi_{11}^{-}
			\end{bmatrix},\ \ 
			\Omega_{j}^{-}=\begin{bmatrix}
				-\widetilde{\phi}^{-}_{11}\ \varphi_{2j}\\[4pt]
				-\widetilde{\phi}^{-}_{11}\  \varphi_{3j}\\[4pt]
				\ \ \ \widetilde{\phi}^{-}_{12}\ \varphi_{1j}\\[4pt]
				\ \ \ \widetilde{\phi}^{-}_{13}\ \varphi_{1j}
			\end{bmatrix},\ \ j=1,2,\\[4pt]
			&\begin{bmatrix}
				\varphi_{1}&\varphi_{2}
			\end{bmatrix}=
			\begin{bmatrix}
				\phi_{2}^{-}&\phi_{3}^{-}
			\end{bmatrix}
			\begin{bmatrix}
				s_{33}&-s_{23}\\[4pt]
				-s_{32}&s_{22}
			\end{bmatrix},\ \ \ \ \ \ \ 
			\begin{bmatrix}
				\widehat{\varphi}_{1}&\widehat{\varphi}_{2}
			\end{bmatrix}^{\top}=
			\begin{bmatrix}
				\widehat{s}_{33}&-\widehat{s}_{23}\\[4pt]
				-\widehat{s}_{32}&\widehat{s}_{22}
			\end{bmatrix}
			\begin{bmatrix}
				\widetilde{\phi}^{-}_{2}\\[4pt]
				\widetilde{\phi}^{-}_{3}
			\end{bmatrix},\\[4pt]
			&\ \ \widetilde{\Phi}^{\pm}=(\Phi^{\pm})^{-1}=\begin{bmatrix}
				\widetilde{\phi}^{\pm}_{1}& \widetilde{\phi}^{\pm}_{2}& \widetilde{\phi}^{\pm}_{3}
			\end{bmatrix}^{\top},\ \ \ \  \widetilde{\phi}^{\pm}_{k}=\begin{bmatrix}
				\widetilde{\phi}^{\pm}_{k1}&\widetilde{\phi}^{\pm}_{k2}&\widetilde{\phi}^{\pm}_{k3}
			\end{bmatrix},\ \ k=1,2,3,\\[4pt] &\ \ \ \varphi_{l}=\begin{bmatrix}
				\varphi_{1l}&\varphi_{2l}& \varphi_{3l}
			\end{bmatrix}^{\top},\ \ \ \  \widehat{\varphi}_{l}=\begin{bmatrix}
				\widehat{\varphi}_{l1}& \widehat{\varphi}_{l2}& \widehat{\varphi}_{l3}
			\end{bmatrix},\ \ l=1,2.
		\end{split}
	\end{equation}
	and the eigenfunctions ${\bf Z}_{j}^{\pm}(\lambda;x,t),\ \Omega_{j}^{\pm}(\lambda;x,t)\ (j=1,2)$ are all complete. Here, the superscript $^{\pm}$ of ${\bf Z}_{j}^{\pm}(\lambda;x,t)$ and $\Omega_{j}^{\pm}(\lambda;x,t)$ mean that they are analytic for $\lambda\in\mathbb{C}^{\pm}$, respectively. 
	Similarly,
	\begin{equation}
		\mathcal{F}_{h}(\mathcal{L}){\bf Z}_{j}^{\pm}=\mathcal{F}_{h}(2\lambda) {\bf Z}_{j}^{\pm},\ \ \ \ \mathcal{F}_{h}(\mathcal{L}^{A})\Omega_{j}^{\pm}=\mathcal{F}_{h}(2\lambda) \Omega_{j}^{\pm},\ \ j=1,2.
	\end{equation}
	
	Then we assume a sufficiently smooth and
	decaying vector function $\vartheta(x,t)=\begin{bmatrix}
		\vartheta_{1},&\vartheta_{2},&\vartheta_{3},&\vartheta_{4}
	\end{bmatrix}^{\top}$ which may be expanded in terms of the eigenfunctions  \eqref{3-squared-eigenfunction} as
	\begin{equation}
		\vartheta(x,t)=-\frac{1}{\pi}\bigg(\int_{\Gamma_{\mathbb{R}}^{+}}s_{11}^{-2}(\xi)\int_{\mathbb{R}}{\bf F}_{1}(\xi;x,y,t)\vartheta(y,t)\dd y\dd\xi+\int_{\Gamma_{\mathbb{R}}^{-}}\widehat{s}_{11}^{\ -2}(\xi)\int_{\mathbb{R}}{\bf F}_{2}(\xi;x,y,t)\vartheta(y,t)\dd y\dd\xi\bigg),
	\end{equation}
	where
	\begin{equation}
		{\bf F}_{1}(\lambda;x,y,t)=\sum\limits_{j=1}^{2}{\bf Z}_{j}^{+}(\lambda;x,t)\big(\Omega_{j}^{+}(\lambda;y,t)\big)^{\top},\ \ \ {\bf F}_{2}(\lambda;x,y,t)=\sum\limits_{j=1}^{2}{\bf Z}_{j}^{-}(\lambda;x,t)\big(\Omega_{j}^{-}(\lambda;y,t)\big)^{\top},
	\end{equation}
	and $\Gamma_{\mathbb{R}}^{\pm}$ are the semicircular contour in the upper and lower half plane evaluated
	from $-\infty$ to $+\infty$, respectively. Note that ${\bf F}_{1}(\lambda;x,y,t)$ is analytic in the upper half plane, while ${\bf F}_{2}(\lambda;x,y,t)$ is analytic in the lower half plane, and $s_{11}^{-2}(\lambda),\ \widehat{s}_{11}^{\ -2}(\lambda)$ are meromorphic in the upper and lower half-planes, respectively.
	
	So we can use $\mathcal{F}_{h}(\mathcal{L})$ to act on the vector function $\vartheta(x,t)$ to get
	\begin{multline}
		\mathcal{F}_{h}(\mathcal{L})\vartheta(x,t)=-\frac{1}{\pi}\bigg(\int_{\Gamma_{\mathbb{R}}^{+}}\mathcal{F}_{h}(2\lambda)s_{11}^{-2}(\xi)\int_{\mathbb{R}}{\bf F}_{1}(\xi;x,y,t)\vartheta(y,t)\dd y\dd\xi\\
		+\int_{\Gamma_{\mathbb{R}}^{-}}\mathcal{F}_{h}(2\lambda)\widehat{s}_{11}^{\ -2}(\xi)\int_{\mathbb{R}}{\bf F}_{2}(\xi;x,y,t)\vartheta(y,t)\dd y\dd\xi\bigg).
	\end{multline}
	Then we take $\vartheta(x,t)=\begin{bmatrix}
		\ \ \ {\bf q}^{\top}\\[3pt]
		-{\bf q}^{\dagger}
	\end{bmatrix}$, yields
	\begin{multline}\label{3-F(L)qr}
		\mathcal{F}_{h}(\mathcal{L})\begin{bmatrix}
			\ \ \ {\bf q}^{\top}\\[3pt]
			-{\bf q}^{\dagger}
		\end{bmatrix}=
		-\frac{1}{\pi}\bigg(\int_{\Gamma_{\mathbb{R}}^{+}}|4\xi^{2}|^{\epsilon}\ s_{11}^{-2}(\xi)\int_{\mathbb{R}}{\bf F}_{1}(\xi;x,y,t)\widehat{\vartheta}(y,t)\dd y\dd\xi\\
		+\int_{\Gamma_{\mathbb{R}}^{-}}|4\xi^{2}|^{\epsilon}\ \widehat{s}_{11}^{\ -2}(\xi)\int_{\mathbb{R}}{\bf F}_{2}(\xi;x,y,t)\widehat{\vartheta}(y,t)\dd y\dd\xi\bigg),
	\end{multline}
	where
	\begin{equation}\label{3-over-vartheta}
		\widehat{\vartheta}(x,t)=\begin{bmatrix}
			\widehat{\vartheta}_{1}\\[4pt]
			\widehat{\vartheta}_{2}\\[4pt]
			\widehat{\vartheta}_{3}\\[4pt]
			\widehat{\vartheta}_{4}
		\end{bmatrix}=
		\begin{bmatrix}
			-\frac{1}{2}\big(q_{1xx}+2q_{1}(|q_{1}|^{2}+|q_{2}|^{2})\big)-\ii\alpha\big(q_{1xxx}+3q_{1x}(2|q_{1}|^{2}+|q_{2}|^{2})+3q_{2x}q_{1}q_{2}^{*}\big)\\[4pt]
			-\frac{1}{2}\big(q_{2xx}+2q_{2}(|q_{1}|^{2}+|q_{2}|^{2})\big)-\ii\alpha\big(q_{2xxx}+3q_{1x}q_{1}^{*}q_{2}+3q_{2x}(|q_{1}|^{2}+2|q_{2}|^{2})\big)\\[4pt]
			\ \ \  \frac{1}{2}\big(q_{1xx}^{*}+2q_{1}^{*}(|q_{1}|^{2}+|q_{2}|^{2})\big)-\ii\alpha\big(q_{1xxx}^{*}+3q_{1x}^{*}(2|q_{1}|^{2}+|q_{2}|^{2})+3q_{2x}^{*}q_{1}^{*}q_{2}\big)\\[4pt]
			\ \ \ \frac{1}{2}\big(q_{2xx}^{*}+2q_{2}^{*}(|q_{1}|^{2}+|q_{2}|^{2})\big)-\ii\alpha\big(q_{2xxx}^{*}+3q_{1x}^{*}q_{1}q_{2}^{*}+3q_{2x}^{*}(|q_{1}|^{2}+2|q_{2}|^{2})\big)
		\end{bmatrix}.
	\end{equation}
	Therefore, we have the explicit representation of the fcHirota equation by combining the equations \eqref{3-three-scattering},\  \eqref{3-fcHirota-1} and \eqref{3-F(L)qr}, 
	\begin{equation*}
		\ii\widehat{{\bf q}}_{t}{+}\frac{1}{\pi}\bigg(\int_{\Gamma_{\mathbb{R}}^{+}}|4\xi^{2}|^{\epsilon}\ s_{11}^{-2}(\xi)\int_{\mathbb{R}}{\bf F}_{1}(\xi;x,y,t)\widehat{\vartheta}(y,t)\dd y\dd\xi {+}\int_{\Gamma_{\mathbb{R}}^{-}}|4\xi^{2}|^{\epsilon}\ \widehat{s}_{11}^{\ -2}(\xi)\int_{\mathbb{R}}{\bf F}_{2}(\xi;x,y,t)\widehat{\vartheta}(y,t)\dd y\dd\xi\bigg){=}0,
	\end{equation*}
	where $\widehat{{\bf q}}(x,t)=\begin{bmatrix}
		q_{1},&q_{2},&q_{1}^{*},&q_{2}^{*}
	\end{bmatrix}$. Then the fcHirota equation is given by
	\begin{equation}\label{3-fcHirota}
		\begin{cases}
			\ii q_{1t}+\frac{1}{\pi}\Big(\int_{\Gamma_{\mathbb{R}}^{+}}|4\lambda^{2}|^{\epsilon}\ s_{11}^{-2}(\lambda)\int_{\mathbb{R}}\tau_{11}(\lambda;x,y,t)\dd y\dd\lambda +\int_{\Gamma_{\mathbb{R}}^{-}}|4\lambda^{2}|^{\epsilon}\ \widehat{s}_{11}^{\ -2}(\lambda)\int_{\mathbb{R}}\tau_{12}(\lambda;x,y,t)\dd y\dd\lambda\Big)=0,\\[4pt]
			\ii q_{2t}+\frac{1}{\pi}\Big(\int_{\Gamma_{\mathbb{R}}^{+}}|4\lambda^{2}|^{\epsilon}\ s_{11}^{-2}(\lambda)\int_{\mathbb{R}}\tau_{21}(\lambda;x,y,t)\dd y\dd\lambda +\int_{\Gamma_{\mathbb{R}}^{-}}|4\lambda^{2}|^{\epsilon}\ \widehat{s}_{11}^{\ -2}(\lambda)\int_{\mathbb{R}}\tau_{22}(\lambda;x,y,t)\dd y\dd\lambda\Big)=0,
		\end{cases}
	\end{equation}
	where
	\begin{equation}\label{3-tau}
		\begin{split}
			\tau_{j1}(\lambda;x,y,t)=&-\widetilde{\phi}^{+}_{1,j+1}\phi_{11}^{-}\Big((\phi_{12}^{+}\widehat{\varphi}_{12}+\phi_{13}^{+}\widehat{\varphi}_{22})\widehat{\vartheta}_{3}+(\phi_{12}^{+}\widehat{\varphi}_{13}+\phi_{13}^{+}\widehat{\varphi}_{23})\widehat{\vartheta}_{4}\Big)\\[2pt]
			&+\widetilde{\phi}^{+}_{1,j+1}\big(\phi_{12}^{+}\widehat{\varphi}_{11}+\phi_{13}^{+}\widehat{\varphi}_{21}\big)\big(\phi_{21}^{-}\widehat{\vartheta}_{1}+\phi_{31}^{-}\widehat{\vartheta}_{2}\big),\\[4pt]
			\tau_{j2}(\lambda;x,y,t)=&-\phi_{11}^{+}\widetilde{\phi}_{11}^{-}\Big((\widetilde{\phi}^{+}_{2,j+1}\varphi_{21}+\widetilde{\phi}^{+}_{3,j+1}\varphi_{22})\widehat{\vartheta}_{1}+(\widetilde{\phi}^{+}_{2,j+1}\varphi_{31}+\widetilde{\phi}^{+}_{3,j+1}\varphi_{32})\widehat{\vartheta}_{2}\Big)\\[2pt]
			&+\phi_{11}^{+}\big(\widetilde{\phi}^{+}_{2,j+1}\varphi_{11}+\widetilde{\phi}^{+}_{3,j+1}\varphi_{12}\big)\big(\widetilde{\phi}_{12}^{-}\widehat{\vartheta}_{3}+\widetilde{\phi}_{13}^{-}\widehat{\vartheta}_{4}\big),\ \ \ \ \ \  j=1,2,
		\end{split}
	\end{equation}
	and in equation \eqref{3-tau}, $\widehat{\vartheta}_{l}=\widehat{\vartheta}_{l}(y,t)\ (l=1,\cdots,4)$,
	\begin{equation*}
		\begin{split}
			&\phi_{1k}^{+}=\phi_{1k}^{+}(\lambda;x,t),\ \ \ \  \widetilde{\phi}^{+}_{k,j+1}=\widetilde{\phi}_{k,j+1}^{+}(\lambda;x,t),\ \ \ \  \varphi_{kj}=\varphi_{kj}(\lambda;y,t),\\
			&\phi_{k1}^{-}=\phi_{k1}^{-}(\lambda;y,t),\ \ \ \  \widetilde{\phi}_{1k}^{-}=\widetilde{\phi}_{1k}^{-}(\lambda;y,t),\ \ \ \ \ \ \ \ \ \ \ \ \  \widehat{\varphi}_{jk}=\widehat{\varphi}_{jk}(\lambda;y,t),\ \ \ \ j=1,2,\ k=1,2,3.
		\end{split}
	\end{equation*}
	In particular, when $\epsilon=0$, the equation \eqref{3-fcHirota} will degenerate into the classical coupled Hirota equation.
	
	Based on the results obtained in this section, we will analyze the fractional soliton solutions of the fcHirota equation in the next section.

	\section{Fractional $n$-soliton solution}\label{3-sec.3}
	To find the fractional soliton solutions, we need to set $J(\lambda;x,t)=\mathbb{I}$ in the RHP $2$. This can be achieved by assuming $\widehat{s}_{12}(\lambda)=\widehat{s}_{13}(\lambda)=s_{21}(\lambda)=s_{31}(\lambda)=0,$ which means that there is no reflection in the scattering problem. Under these conditions,
	\begin{equation}\label{1-T1}
		{\bf M}^{+}(\lambda)= \mathbb{I}_{3}-\frac{1}{\lambda}\ {\bf v}{\bf N}^{-1}\widehat{{\bf v}}+\mathcal{O}(\lambda^{-2}),\ \ \lambda\in\mathbb{C^{+}}\rightarrow\infty.
	\end{equation}
	Below we can easily find the spatial and temporal
	evolutions for the vectors $|v_{k}\rangle$ and $\langle\widehat{v}_{j}|$. For example, let us take the $x$-derivative of both sides of the
	equation ${\bf P}^{+}(\lambda_{k};x,t)|v_{k}\rangle=0$. Then there is
	\begin{equation}
		|v_{k}(\lambda_{k};x)\rangle=\ee^{-\ii\lambda_{k}\Lambda x+\int^{x}_{x_{0}}\zeta_{k}(y)dy}\gamma_{k},\ \ \ \gamma_{k}=\begin{bmatrix}
			\gamma_{1k}&\gamma_{2k}&\gamma_{3k}
		\end{bmatrix}^{\top}.
	\end{equation}
	Without losing generality, we take $\zeta_{k}(x)=0,$ then
	\begin{equation}
		|v_{k}(\lambda_{k};x)\rangle=\ee^{-\ii\lambda_{k}\Lambda x}\gamma_{k}.
	\end{equation}
	Similarly, we can obtain another relation with respect to $t$. Summing up, we obtain
	\begin{equation}
		|v_{k}(\lambda_{k};x,t)\rangle=\ee^{-\ii\lambda_{k}\Lambda x+\ii\mathcal{F}_{h}(2\lambda_{k})\Omega t}\gamma_{k},\ \ 1\leq k\leq n,
	\end{equation}
	where $\gamma_{k}$ is independent of $x$ and $t$. And we can obtain the concrete representation of $\langle\widehat{v}_{j}|$
	\begin{equation}
		\langle\widehat{v}_{j}(\widehat{\lambda}_{j};x,t)|=\widehat{\gamma}_{j}\ee^{\ii\widehat{\lambda}_{j}\Lambda x-\ii\mathcal{F}_{h}(2\widehat{\lambda}_{j})\Omega t},\ \ \ \widehat{\gamma}_{j}=\begin{bmatrix}
			\widehat{\gamma}_{j1}&\widehat{\gamma}_{j2}&\widehat{\gamma}_{j3}
		\end{bmatrix},\ \ \ 1\leq j\leq n,
	\end{equation}
	where $\widehat{\gamma}_{j}$ is independent of $x$ and $t$. For convenience, we denote
	\begin{equation}\label{2-vk-hvk}
		|v_{k}(\lambda_{k};x,t)\rangle=
		\begin{bmatrix}
			\ee^{-\ii\lambda_{k}x}\gamma_{1k}\\[4pt]
			\ee^{\ii\theta_{\epsilon}(\lambda_{k};x,t)}\gamma_{2k}\\[4pt]
			\ee^{\ii\theta_{\epsilon}(\lambda_{k};x,t)}\gamma_{3k}
		\end{bmatrix},\ \ \ \ \ \ \ 
		\langle\widehat{v}_{j}(\widehat{\lambda}_{j};x,t)|=
		\begin{bmatrix}
			\ee^{\ii\widehat{\lambda}_{j}x}\widehat{\gamma}_{j1}\\[4pt]
			\ee^{-\ii\theta_{\epsilon}(\widehat{\lambda}_{j};x,t)}\widehat{\gamma}_{j2}\\[4pt]
			\ee^{-\ii\theta_{\epsilon}(\widehat{\lambda}_{j};x,t)}\widehat{\gamma}_{j3}
		\end{bmatrix}^{\top},
	\end{equation}
	where $\theta_{\epsilon}(\lambda;x,t)=\lambda x+2(1+4\alpha\lambda)\lambda^{2}|4\lambda^{2}|^{\epsilon}t$, 
	and $\gamma_{lk},\ \widehat{\gamma}_{jl},\ (l=1,2,3)$ are all complex constants. Furthermore, we know $s^{*}_{11}(\lambda^{*})=\widehat{s}_{11}(\lambda),\ \lambda\in\mathbb{C^{-}}$. Together with $s_{11}(\lambda_{k})=0,\ \widehat{s}_{11}(\widehat{\lambda}_{j})=0$, we can suppose
	\begin{equation}
		\widehat{\lambda}_{k}=\lambda_{k}^{*},\ \ \ \lambda_{k}\in\mathbb{C}^{+},\ \ k=1,\cdots,n.
	\end{equation}
	
	Based on the above preparations, we have the fractional $n$-soliton solution of the fcHirota equation as
	\begin{equation}\label{3-N-soliton}
		q_{1}^{[n]}(x,t)=-2\ii\frac{\det({\bf T}_{1})}{\det({\bf N})},\ \ \ \ \ \ \  q_{2}^{[n]}(x,t)=-2\ii\frac{\det({\bf T}_{2})}{\det({\bf N})},
	\end{equation}
	where
	\begin{equation}
		{\bf N}(\lambda;x,t)=
		\begin{bmatrix}
			\dfrac{\langle\widehat{v}_{1}|v_{1}\rangle}{\lambda_{1}-\widehat{\lambda}_{1}} &\cdots&\dfrac{\langle\widehat{v}_{1}|v_{n}\rangle}{\lambda_{n}-\widehat{\lambda}_{1}}\\[4pt]
			\vdots&\ddots&\vdots\\[4pt]
			\dfrac{\langle\widehat{v}_{n}|v_{1}\rangle}{\lambda_{1}-\widehat{\lambda}_{n}} &\cdots&\dfrac{\langle\widehat{v}_{n}|v_{n}\rangle}{\lambda_{n}-\widehat{\lambda}_{n}}
		\end{bmatrix},\ \ \ \ \ \ 
		{\bf T}_{j}(\lambda;x,t)=
		\begin{bmatrix}
			{\bf N}&-\widehat{v}^{(j+1)}\\[4pt]
			v^{(1)}&0
		\end{bmatrix},\ \ j=1,\ 2,
	\end{equation}
	here $v^{(j)},\ \widehat{v}^{(j)}$ denote the $j$-th row of ${\bf v}$ and $\widehat{{\bf v}}$, respectively, and
	\begin{equation}
		\begin{split}
			&	v^{(1)}=
			\begin{bmatrix}
				\ee^{-\ii\lambda_{1}x}\gamma_{11}&\ee^{-\ii\lambda_{2}x}\gamma_{12}&\cdots&\ee^{-\ii\lambda_{n}x}\gamma_{1n}
			\end{bmatrix},\\
			&\widehat{v}^{(j)}=
			\begin{bmatrix}
				\ee^{-\ii\theta_{\epsilon}(\widehat{\lambda}_{1};x,t)}\widehat{\gamma}_{1j}&\ee^{-\ii\theta_{\epsilon}(\widehat{\lambda}_{2};x,t)}\widehat{\gamma}_{2j}&\cdots&\ee^{-\ii\theta_{\epsilon}(\widehat{\lambda}_{n};x,t)}\widehat{\gamma}_{nj}
			\end{bmatrix}^{\top},\ \ j=1,\ 2,\ 3.
		\end{split}
	\end{equation}
	And the condition $\widehat{\gamma}_{11}\gamma_{jk}^{*}-\widehat{\gamma}_{kj}\gamma_{11}^{*}=0\ (j=2,3,\ k=1,\cdots,n)$, which can be derived from the symmetry property of ${\bf M}_{1}^{+}(\lambda;x,t)$, also need to be satisfied. 
	
		In what follows, we take $\widehat{\gamma}_{kj}=\gamma_{jk}^{*}$ ($j=1,2,3,k=1,\cdots,n$), then $\langle\widehat{v}_{k}(\widehat{\lambda}_{k};x,t)|=|v_{k}(\lambda_{k};x,t)\rangle^{\dagger}\ (k=1,\cdots,n)$. And we will discuss the fractional one- and two-soliton solutions of the fcHirota equation in terms of the equation \eqref{3-N-soliton} with $n=1,2$.
	
	\subsection{Fractional one-soliton solution}\label{subsection.one-soliton}
	We choose the spectral parameter $\lambda_{1}=a_{1}+\ii b_{1}$ with $b_{1}\neq 0$, then the expression of fractional one-soliton solution of the fcHirota equation can be written as
	\begin{equation}\label{3-sol-1}
		q_{1}^{[1]}(x,t)=2b_{1}\gamma_{21}^{*}\big(\gamma_{11}^{*}\sqrt{g_{1}}\big)^{-1}\ee^{f_{2}}{\rm sech}(f_{1}),\ \ \ \ q_{2}^{[1]}(x,t)=2b_{1}\gamma_{31}^{*}\big(\gamma_{11}^{*}\sqrt{g_{1}}\big)^{-1}\ee^{f_{2}}{\rm sech}(f_{1}),
	\end{equation}
	where
	\begin{equation}\label{3-f-g1}
		\begin{split}
			&f_{1}=2b_{1}x+4^{1+\epsilon}b_{1}\big(a_{1}+2\alpha(3a_{1}^{2}-b_{1}^{2})\big)(a_{1}^{2}+b_{1}^{2})^{\epsilon}t-\ln\sqrt{g_{1}},\\[2pt]
			&f_{2}=-2\ii a_{1}x-2^{1+2\epsilon}\ii\big(4\alpha a_{1}(a_{1}^{2}-3b_{1}^{2})+(a_{1}^{2}-b_{1}^{2})\big)(a_{1}^{2}+b_{1}^{2})^{\epsilon}t,\\[2pt]
			& g_{1}=\dfrac{|\gamma_{21}|^{2}+|\gamma_{31}|^{2}}{|\gamma_{11}|^{2}}.
		\end{split}
	\end{equation}
	The graphs of this soliton are illustrated in Fig.\ref{fig:sol-1} by choosing the different values of $\epsilon$.
	\begin{figure}
		\centering
		\includegraphics[width=0.9\linewidth]{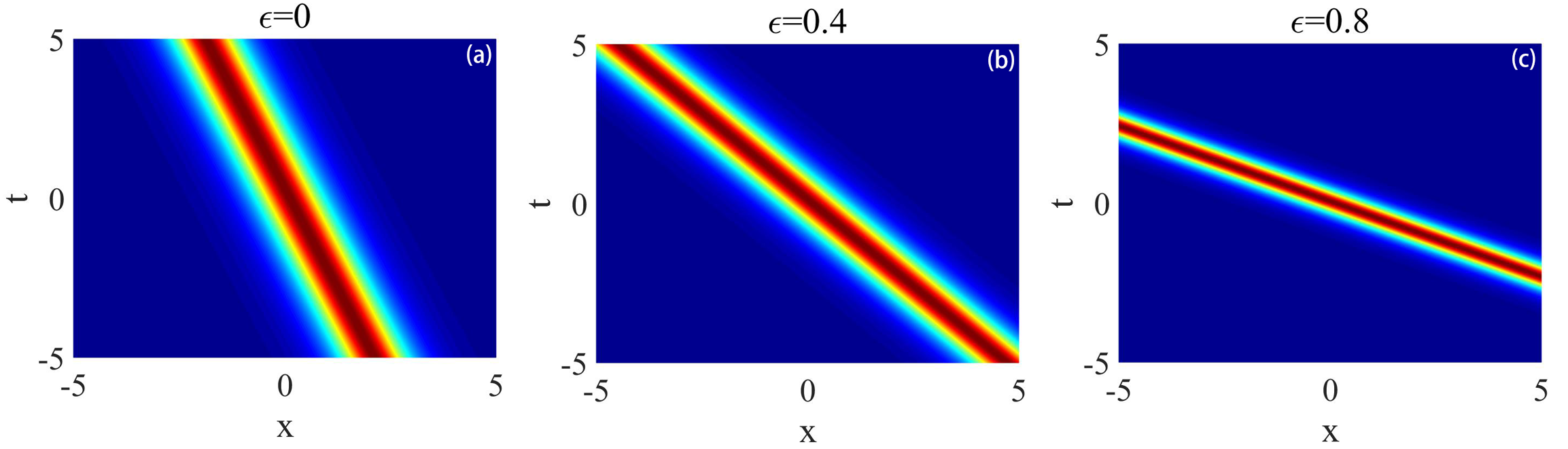}
		\caption{\small{Fractional one-soliton solution \eqref{3-sol-1}. Choosing the parameters: $\gamma_{11}=1,\ \gamma_{21}=1,\ \gamma_{31}=2,\  a_{1}=1,\ b_{1}=1,\ \alpha=-\frac{1}{5}.$}}
		\label{fig:sol-1}
	\end{figure}
	One may notice that the solutions of $q_{1}^{[1]}(x,t)$ and $q_{2}^{[1]}(x,t)$ are basically the same, while their
	amplitudes are different, which is reflected in the difference of the values of $\gamma_{21}^{*}$ and $\gamma_{31}^{*}$, and it can be observed in Fig.\ref{fig:q-1-tv}(a)-(b).
	\begin{figure}
		\centering
		\includegraphics[width=0.9\linewidth]{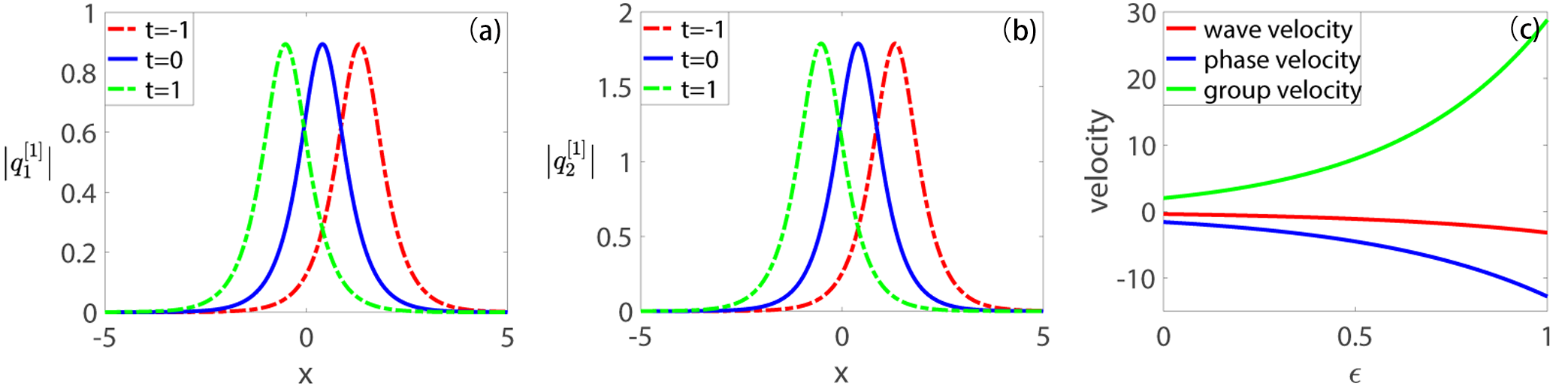}
		\caption{\small{(a) and (b): the direction of wave propagation. The parameter values are the same as in Fig.\ref{fig:sol-1},} and $\epsilon=\frac{2}{5}$. (c): wave velocity, phase velocity and group velocity of the solution \eqref{3-sol-1}.}
		\label{fig:q-1-tv}
	\end{figure}
	
	Through the analysis, we find that the wave velocity $v_{w}(a_{1},b_{1},\alpha,\epsilon)$, phase velocity $v_{p}(a_{1},b_{1},\alpha,\epsilon)$, and group velocity $v_{g}(a_{1},b_{1},\alpha,\epsilon)$ of the fractional one-soliton
	solutions $q_{1}^{[1]}(x,t)$ and $q_{2}^{[1]}(x,t)$ are the same. Furthermore, we can give their forms from the solution \eqref{3-sol-1}
	\begin{equation}\label{3-velocity}
		\begin{split}
			&v_{w}(a_{1},b_{1},\alpha,\epsilon)=-2^{1+2\epsilon}\big(a_{1}+2\alpha(3a_{1}^{2}-b_{1}^{2})\big)(a_{1}^{2}+b_{1}^{2})^{\epsilon},\\
			&v_{p}(a_{1},b_{1},\alpha,\epsilon)=-4^{\epsilon}\big(4\alpha(a_{1}^{2}-3b_{1}^{2})+a_{1}^{-1}(a_{1}^{2}-b_{1}^{2})\big)(a_{1}^{2}+b_{1}^{2})^{\epsilon},\\
			&v_{g}(a_{1},b_{1},\alpha,\epsilon)=-2^{1+2\epsilon}\left(\epsilon a_{1}\big((a_{1}^{2}-b_{1}^{2})+4\alpha a_{1}(a_{1}^{2}-3b_{1}^{2})\big)+(a_{1}^{2}+b_{1}^{2})\big(a_{1}+6\alpha(a_{1}^{2}-b_{1}^{2})\big)\right)(a_{1}^{2}+b_{1}^{2})^{\epsilon-1}.
		\end{split}
	\end{equation}
	According to the parameter values in Fig.\ref{fig:sol-1} and the equation \eqref{3-velocity}, $v_{w}=-0.4\times 8^{\epsilon}<0$. So the soliton is left-going travelling-wave solitons (see Fig.\ref{fig:q-1-tv}(a)-(b)) at $\epsilon=0,\ 0.4,\ 0.8$. Furthermore, with the increase of $\epsilon$, the velocity of the travelling-wave will be faster (see Fig.\ref{fig:q-1-tv}(c)).

	\subsection{Fractional two-soliton solution}
	For $n=2$, we choose the spectral parameters $\lambda_{1}=a_{1}+\ii b_{1},\ \lambda_{2}=a_{2}+\ii b_{2}$, and $\widehat{\lambda}_{j}=\lambda_{j}^{*}$ ($j=1,2$), then the expression of the fractional two-soliton solution of the fcHirota equation can be expressed as
	\begin{equation}\label{3-sol-2}
		q_{1}^{[2]}(x,t)=-2\ii\frac{\det({\bf T}_{1})}{\det({\bf N})},\ \ \ \ \ \ \  q_{2}^{[2]}(x,t)=-2\ii\frac{\det({\bf T}_{2})}{\det({\bf N})},
	\end{equation}
	where
	\begin{equation}
		\begin{split}
			&{\bf N}(\lambda;x,t)=
			\begin{bmatrix}
				\dfrac{\langle\widehat{v}_{1}|v_{1}\rangle}{\lambda_{1}-\widehat{\lambda}_{1}} &\dfrac{\langle\widehat{v}_{1}|v_{2}\rangle}{\lambda_{2}-\widehat{\lambda}_{1}}\\[15pt]
				\dfrac{\langle\widehat{v}_{2}|v_{1}\rangle}{\lambda_{1}-\widehat{\lambda}_{2}} &\dfrac{\langle\widehat{v}_{2}|v_{2}\rangle}{\lambda_{2}-\widehat{\lambda}_{2}}
			\end{bmatrix},\ \ \ 
			{\bf T}_{j-1}(\lambda;x,t)=
			\begin{bmatrix}
				\dfrac{\langle\widehat{v}_{1}|v_{1}\rangle}{\lambda_{1}-\widehat{\lambda}_{1}} &\dfrac{\langle\widehat{v}_{1}|v_{2}\rangle}{\lambda_{2}-\widehat{\lambda}_{1}} &-\ee^{-\ii\theta_{\epsilon}(\widehat{\lambda}_{1};x,t)}\widehat{\gamma}_{1j}\\[12pt]
				\dfrac{\langle\widehat{v}_{2}|v_{1}\rangle}{\lambda_{1}-\widehat{\lambda}_{2}} &\dfrac{\langle\widehat{v}_{2}|v_{2}\rangle}{\lambda_{2}-\widehat{\lambda}_{2}} &-\ee^{-\ii\theta_{\epsilon}(\widehat{\lambda}_{2};x,t)}\widehat{\gamma}_{2j}\\[12pt]
				\ee^{-\ii\lambda_{1}x}\gamma_{11}&\ee^{-\ii\lambda_{2}x}\gamma_{12}&0
			\end{bmatrix},
		\end{split}
	\end{equation}
and $\widehat{\gamma}_{kj}=\gamma_{jk}^{*}\ (j=1,2,3,k=1,2)$.
	The graphs of the solitons are illustrated in Fig.\ref{fig:sol-2}. From Figs.\ref{fig:sol-2}-\ref{fig:q-2-t}, we know that the solitons are right-going travelling-wave solitons at $\epsilon=0,\ 0.4,\ 0.8$. Moreover, with the increase of $\epsilon$, the velocity of the travelling-wave will be slower.
	\begin{figure}
		\centering
		\includegraphics[width=0.9\linewidth]{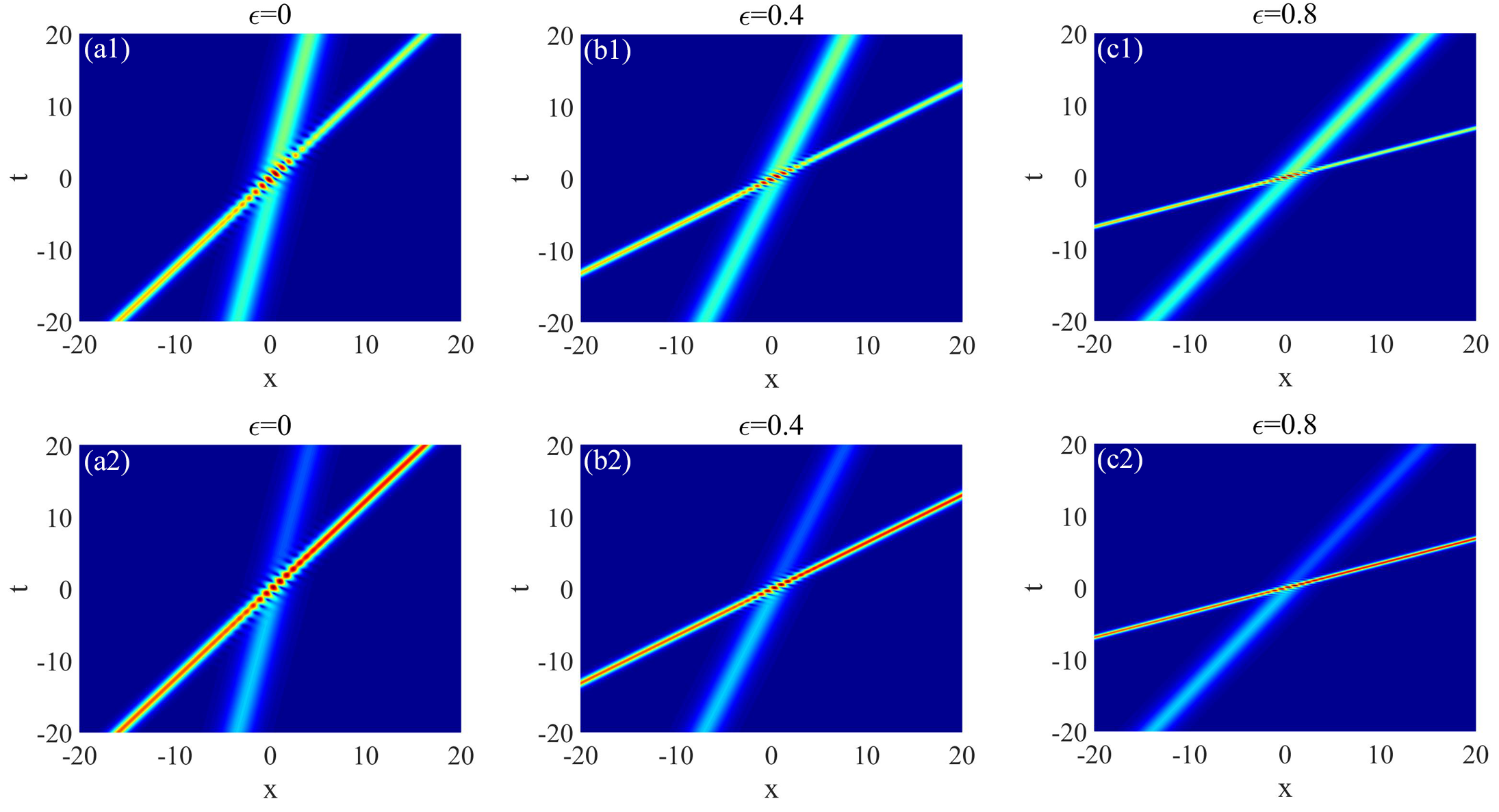}
		\caption{\small{Fractional two-soliton solution \eqref{3-sol-2}. Upper row: the graphs of the solution $\big|q_{1}^{[2]}\big|$. lower row: the graphs of the solution $\big|q_{2}^{[2]}\big|$. The parameters are as follows: $\gamma_{11}=2+\ii,\ \gamma_{21}=1-\ii,\ \gamma_{31}=-2+\ii,\ \gamma_{12}=1+\ii,\ \gamma_{22}=-1+\ii,\ \gamma_{32}=1+\frac{1}{2}\ii,\ a_{1}=-\frac{1}{2},\ b_{1}=1,\ a_{2}=1,\ b_{2}=\frac{1}{2},\  \alpha=-\frac{1}{5}.$}}
		\label{fig:sol-2}
	\end{figure}
	\begin{figure}
		\centering
		\includegraphics[width=0.8\linewidth]{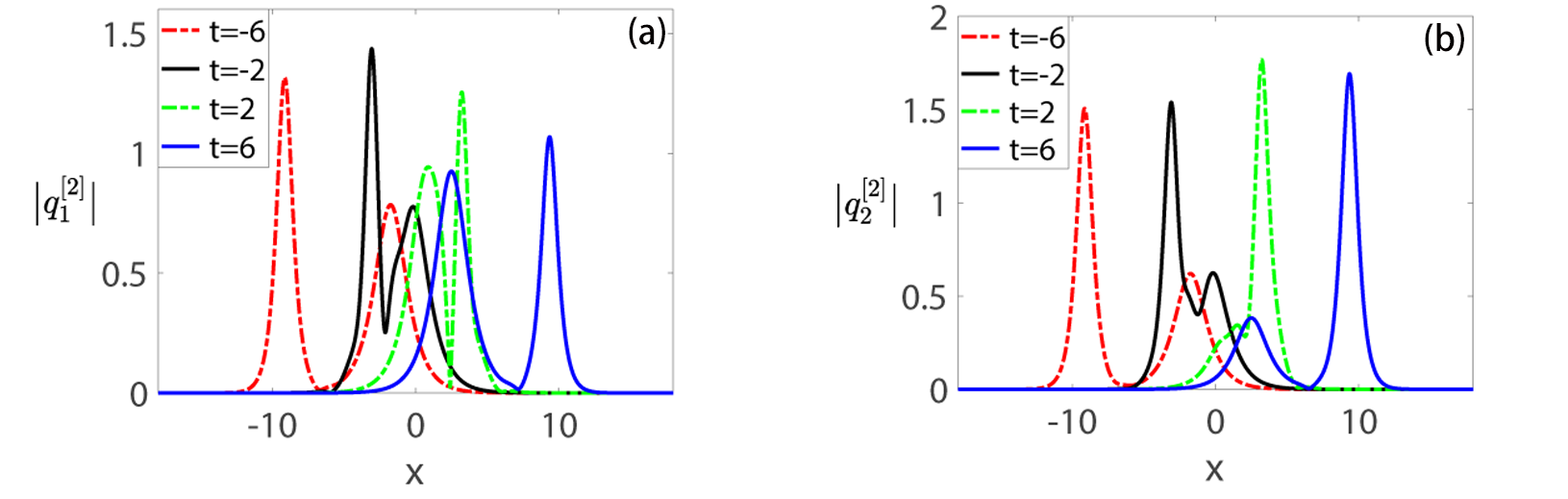}
		\caption{\small{The direction of wave propagation. The parameter values are the same as in Fig.\ref{fig:sol-2}, and $\epsilon=\frac{2}{5}$.}}
		\label{fig:q-2-t}
	\end{figure}
	
	Next, we will analyze the asymptotic analysis of the fractional two-soliton solution.
	\begin{theorem}\label{Th-asy}
		For the fractional coupled Hirota equation, we define $\widehat{\lambda}_{j}=\lambda_{j}^{*},\  \lambda_{j}\in\mathbb{C}\backslash\mathbb{R}\ (j=1,2)$, then its fractional two-soliton solutions $q_{1}^{[2]}(x,t)$ and $q_{2}^{[2]}(x,t)$ can be approximated as the sum of two fractional one-soliton solutions, as $|t|\to \infty$.
	\end{theorem}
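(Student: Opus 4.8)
The plan is to carry out a standard long-time asymptotic analysis of the $2$-soliton tau-quotient \eqref{3-sol-2} by moving into the comoving frame of each soliton and showing that the other soliton's contribution degenerates to a constant phase/position shift. Concretely, fix $j\in\{1,2\}$ and introduce the variable $\xi = x - v_w(a_j,b_j,\alpha,\epsilon)\,t$, where $v_w$ is the wave velocity from \eqref{3-velocity}. In this frame the exponential $\ee^{\ii\theta_\epsilon(\lambda_j;x,t)}$ (and its conjugate) has a bounded real part, so the ``$j$-th'' plane-wave factors $\ee^{-\ii\lambda_j x},\ \ee^{\pm\ii\theta_\epsilon(\lambda_j;x,t)}$ appearing in $|v_j\rangle$, $\langle\widehat v_j|$ stay $\mathcal O(1)$, while the ``other'' factors $\ee^{\pm\ii\theta_\epsilon(\lambda_k;x,t)}$ with $k\neq j$ have real part $\to\pm\infty$ as $t\to\pm\infty$ (provided $\Re v_w(a_1,b_1,\cdot)\neq \Re v_w(a_2,b_2,\cdot)$, i.e. the solitons have distinct velocities — this should be noted as the genericity hypothesis, and it follows from $\lambda_1\neq\lambda_2$ together with the explicit form of $v_w$ away from a measure-zero set of parameters).

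First I would write out $\det({\bf N})$, $\det({\bf T}_1)$, $\det({\bf T}_2)$ as explicit bilinear combinations of the entries $n_{jk}=\langle\widehat v_j|v_k\rangle/(\lambda_k-\widehat\lambda_j)$ together with the row vectors $v^{(1)}$, $\widehat v^{(j+1)}$; each entry factors as a known constant times a product of exponentials in $\theta_\epsilon(\lambda_1;\cdot)$ and $\theta_\epsilon(\lambda_2;\cdot)$. Then, in the frame $\xi$ fixed with $t\to+\infty$, I factor out the dominant exponential in the ``$k$-th'' variable from both numerator and denominator; all terms not carrying the maximal power of $\ee^{\Re\,\ii\theta_\epsilon(\lambda_k;\cdot)}$ are exponentially suppressed and drop out. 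What survives in the limit is precisely a $1\times 1$ (single-soliton) tau-quotient in the variables $\theta_\epsilon(\lambda_j;\cdot)$, but with the bare constants $\gamma_{11},\gamma_{2j+1,?}$ replaced by renormalized constants $\gamma_{lj}^{\pm}$ that absorb the leftover finite pieces of $n_{jj},n_{jk},n_{kj},n_{kk}$ — i.e. a shift in the soliton's position $\ln\sqrt{g_j}$ and in its phase $f_2$. Comparing with \eqref{3-sol-1}, this limit is exactly a fractional one-soliton solution centered on the $\lambda_j$-characteristic. Doing this for $j=1$ and $j=2$, and for $t\to-\infty$ as well, gives the four asymptotic profiles; away from both characteristics the solution decays to $0$, so globally $q^{[2]}\to q^{[1]}_{(\lambda_1),\pm}+q^{[1]}_{(\lambda_2),\pm}$ as $|t|\to\infty$, which is the claim.

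The main obstacle I anticipate is bookkeeping rather than conceptual: one must track how the constants $\gamma_{lk}$ and the cross terms $\lambda_k-\widehat\lambda_j = 2\ii\,\Im\lambda_j$ etc. recombine into a clean $\mathrm{sech}$-profile with the correct renormalized amplitude and phase, and verify that the renormalization factor is the usual soliton-interaction coefficient (a ratio of $|\lambda_1-\lambda_2|^2$ to $|\lambda_1-\lambda_2^*|^2$ type expression). A secondary technical point is establishing the dichotomy $\Re v_w(a_1,b_1,\alpha,\epsilon)\neq \Re v_w(a_2,b_2,\alpha,\epsilon)$ — since $v_w$ is fully explicit from \eqref{3-velocity}, this reduces to checking that a certain polynomial in $a_i,b_i,\alpha$ (times the fractional factor $(a_i^2+b_i^2)^\epsilon$) does not vanish, which holds generically; I would simply state this as the standing assumption of distinct soliton velocities. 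One should also confirm that $\det({\bf N})$ stays bounded away from $0$ along each characteristic in the limit, so the quotient is well-defined — this follows because the surviving limiting $\det({\bf N})$ is the single-soliton denominator $g_j^{1/2}\cosh(f_1^{(j)})$ up to a nonzero constant, which never vanishes.
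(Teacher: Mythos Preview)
Your proposal is correct and follows essentially the same route as the paper's proof: the paper also fixes $\Re\big(\ii\theta_\epsilon(\lambda_j;x,t)\big)$ to be bounded (equivalent to your comoving frame $\xi$), imposes the distinct-velocity assumption $B_1\neq B_2$, lets the remaining exponential $\Re\big(\ii\theta_\epsilon(\lambda_k;x,t)\big)\to\pm\infty$ so that $|v_k\rangle$ degenerates to a constant vector, and then reads off the limiting ${\bf N}$ and ${\bf T}_j$ as an explicit single-soliton $\mathrm{sech}$ profile with shifted center and phase. The only difference is that the paper carries the bookkeeping through to closed form and records the exact shift constants (your anticipated $(\lambda_1-\lambda_2^*)/(\lambda_1-\lambda_2)$-type factor appears there as $\eta_1,\eta_4$, and the polarization-mixing you flagged as ``renormalized $\gamma_{lj}^{\pm}$'' shows up in the more complicated amplitude coefficients of $q^{[2]}_{11,-}$ and $q^{[2]}_{12,+}$), whereas you outline the same computation abstractly.
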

	\begin{proof}
		We suppose $\lambda_{1}=a_{1}+\ii b_{1},\ \lambda_{2}=a_{2}+\ii b_{2}$. 
		Firstly, we know
		\begin{equation*}
			|v_{k}(\lambda_{k};x,t)\rangle=
			\begin{bmatrix}
				\ee^{-\ii\lambda_{k}x}\gamma_{1k}&
				\ee^{\ii\theta_{\epsilon}(\lambda_{k};x,t)}\gamma_{2k}&
				\ee^{\ii\theta_{\epsilon}(\lambda_{k};x,t)}\gamma_{3k}
			\end{bmatrix}^{\top},
		\end{equation*}
		and ${\rm Re}\big(\ii\theta_{\epsilon}(\lambda_{k};x,t)\big)=-b_{k}(x+B_{k}t)$, where
		\begin{equation*}
			B_{k}=4^{1+\epsilon}\big(2\alpha(3a_{k}^{2}-b_{k}^{2})+a_{k}\big)(a_{k}^{2}+b_{k}^{2})^{\epsilon},\ \ k=1,\ 2.
		\end{equation*}
		Now we fix ${\rm Re}\big(\ii\theta_{\epsilon}(\lambda_{1};x,t)\big)=c_{1}\ (|c_{1}|\ll\infty),$ then
		\begin{equation*}
			{\rm Re}\big(\ii\theta_{\epsilon}(\lambda_{2};x,t)\big)
			=-b_{2}(x+B_{2}t)
			=b_{2}\left((B_{1}-B_{2})t+\frac{c_{1}}{b_{1}}\right).
		\end{equation*}
		Without loss of generality, we assume that $b_{1},\ b_{2},\ B_{2}-B_{1}>0$.
		It is easy to find that when $t\rightarrow+\infty$, ${\rm Re}\big(\ii\theta_{\epsilon}(\lambda_{2};x,t)\big)\rightarrow-\infty$, i.e., $|v_{2}\rangle\rightarrow
		\begin{bmatrix}
			1,&0,&0
		\end{bmatrix}^{\top}$. 
		Then we have
		\begin{equation*}
			\begin{split}
				{\bf N}\rightarrow&
				\begin{bmatrix}
					\dfrac{|\gamma_{11}|^{2}\ee^{-\ii(\lambda_{1}-\lambda_{1}^{*})x}+(|\gamma_{21}|^{2}+|\gamma_{31}|^{2})\ee^{\ii\theta_{\epsilon}(\lambda_{1};x,t)-\ii\theta_{\epsilon}^{*}(\lambda_{1};x,t)}
					}{\lambda_{1}-\lambda_{1}^{*}}&\dfrac{\gamma_{11}^{*}\ee^{\ii\lambda_{1}^{*}x}}{\lambda_{2}-\lambda_{1}^{*}}\\[10pt]
					\dfrac{\gamma_{11}\ee^{-\ii\lambda_{1}x}}{\lambda_{1}-\lambda_{2}^{*}}&\dfrac{1}{\lambda_{2}-\lambda_{2}^{*}}\\
				\end{bmatrix}, \\[8pt]
				{\bf T}_{1}\rightarrow&\left[
				\begin{array}{cccc}
					\multicolumn{2}{c}{\raisebox{-1.5ex}[0pt]{${\bf N}$}}
					& -\gamma_{21}^{*}\ee^{-\ii\theta_{\epsilon}^{*}(\lambda_{1};x,t)}\\[4pt]
					& & 0\\[4pt]
					\gamma_{11}\ee^{-\ii\lambda_{1}x} & 1 & 0
				\end{array}\right],\ \ \ \ {\rm as}\ \ t\to+\infty.
			\end{split}
		\end{equation*}
		Based on the solution \eqref{3-sol-2},
		\begin{equation*}
			q^{[2]}_{11,+}(x,t)=2b_{1}\gamma_{21}^{*}(\gamma_{11}^{*}\sqrt{g_{1}})^{-1} \ee^{h_{1}-\ii\eta_{1i}}\ {\rm sech}(h_{2}-\eta_{1r})+\mathcal{O}\big(\ee^{b_{2}(B_{1}-B_{2})t}\big),
		\end{equation*}
		where the subscript $_{+}$ refers to the case of $t\to +\infty$, $g_{1}$ is given by \eqref{3-f-g1},
		\begin{equation*}
			\begin{split}
				&h_{1}=-\ii\Big(2a_{1}x+2^{1+2\epsilon}\big(4\alpha a_{1}(a_{1}^{2}-3b_{1}^{2})+a_{1}^{2}-b_{1}^{2}\big)(a_{1}^{2}+b_{1}^{2})^{\epsilon}t\Big),\\
				&h_{2}=2b_{1}x+4^{1+\epsilon}\big(2\alpha b_{1}(3a_{1}^{2}-b_{1}^{2})+a_{1}b_{1}\big)(a_{1}^{2}+b_{1}^{2})^{\epsilon}t,\\
				&\eta_{1}=\ln\Big(\sqrt{g_{1}}\ \frac{\lambda_{1}-\lambda_{2}^{*}}{\lambda_{1}-\lambda_{2}}\Big):=\eta_{1r}+\ii\eta_{1i}.
			\end{split}
		\end{equation*}
		When $t{\rightarrow}-\infty$, ${\rm Re}\big(\ii\theta_{\epsilon}(\lambda_{2};x,t)\big){\rightarrow}+\infty$, that is, $|v_{2}\rangle{\to}\begin{bmatrix}
			0,&\gamma_{22},&\gamma_{32}
		\end{bmatrix}^{\top}$. 
		Then 
		\begin{equation*}
			\begin{split}
				{\bf N}\rightarrow&
				\begin{bmatrix}
					\dfrac{|\gamma_{11}|^{2}\ee^{-\ii(\lambda_{1}-\lambda_{1}^{*})x}+(|\gamma_{21}|^{2}+|\gamma_{31}|^{2})\ee^{\ii\theta_{\epsilon}(\lambda_{1};x,t)-\ii\theta_{\epsilon}^{*}(\lambda_{1};x,t)}}{\lambda_{1}-\lambda_{1}^{*}}&\dfrac{(\gamma_{21}^{*}\gamma_{22}+\gamma_{31}^{*}\gamma_{32})\ee^{-\ii\theta_{\epsilon}^{*}(\lambda_{1};x,t)x}}{\lambda_{2}-\lambda_{1}^{*}}\\[10pt]
					\dfrac{(\gamma_{21}\gamma_{22}^{*}+\gamma_{31}\gamma_{32}^{*})\ee^{\ii\theta_{\epsilon}(\lambda_{1};x,t)x}}{\lambda_{1}-\lambda_{2}^{*}}&\dfrac{|\gamma_{22}|^{2}+|\gamma_{32}|^{2}}{\lambda_{2}-\lambda_{2}^{*}}\\
				\end{bmatrix}, \\[8pt]
				{\bf T}_{1}\rightarrow&\left[
				\begin{array}{cccc}
					\multicolumn{2}{c}{\raisebox{-1.5ex}[0pt]{${\bf N}$}}
					& -\gamma_{21}^{*}\ee^{-\ii\theta_{\epsilon}^{*}(\lambda_{1};x,t)}\\[4pt]
					& & -\gamma_{22}^{*}\\[4pt]
					\gamma_{11}\ee^{-\ii\lambda_{1}x} & 0 & 0
				\end{array}\right],\ \ \ \ {\rm as}\ \ t\to-\infty.
			\end{split}
		\end{equation*}
		Therefore,
		\begin{equation*}
			\begin{split}
				q_{11,-}^{[2]}(x,t)=&\frac{2b_{1}}{\sqrt{\eta_{2}}}\bigg(\frac{\gamma_{21}^{*}}{\gamma_{11}^{*}}+\frac{2\ii b_{2}\gamma_{22}^{*}(\gamma_{21}^{*}\gamma_{22}+\gamma_{31}^{*}\gamma_{32})}{\gamma_{11}^{*}(|\gamma_{22}|^{2}+|\gamma_{32}|^{2})\big(a_{1}-a_{2}-\ii(b_{1}+b_{2})\big)}\bigg)\ee^{h_{1}}\ {\rm sech}(h_{2}-\ln\sqrt{\eta_{2}})\\
				&+\mathcal{O}\big(\ee^{-b_{2}(B_{1}-B_{2})t}\big),
			\end{split}
		\end{equation*}
		where the subscript $_{-}$ refers to the case of $t\to -\infty$,
		\begin{equation*}
			\eta_{2}=g_{1}-\frac{4b_{1}b_{2}g_{2}}{|\lambda_{1}-\lambda_{2}^{*}|^{2}},\ \ \ g_{2}=\dfrac{|\gamma_{21}\gamma_{22}^{*}+\gamma_{31}\gamma_{32}^{*}|^{2}}{|\gamma_{11}|^{2}\big(|\gamma_{22}|^{2}+|\gamma_{32}|^{2}\big)}.
		\end{equation*}
		Note that  $\eta_{2}>0$ can be obtained through analysis.
		
		Similarly, we can analyze $|v_{1}(\lambda_{k};x,t)\rangle$ by fixing ${\rm Re}\big(\ii\theta_{\epsilon}(\lambda_{2};x,t)\big)=c_{2}\ (|c_{2}|\ll\infty)$, then we can obtain
		\begin{equation*}
			\begin{split}
				q_{12,+}^{[2]}(x,t)=&\frac{2b_{2}}{\sqrt{\eta_{3}}}\bigg(\frac{\gamma_{22}^{*}}{\gamma_{12}^{*}}-\frac{2\ii b_{1}\gamma_{21}^{*}(\gamma_{21}\gamma_{22}^{*}+\gamma_{31}\gamma_{32}^{*})}{\gamma_{12}^{*}(|\gamma_{21}|^{2}+|\gamma_{31}|^{2})\big(a_{1}-a_{2}+\ii(b_{1}+b_{2})\big)}\bigg)\ee^{h_{3}}\ {\rm sech}(h_{4}-\ln\sqrt{\eta_{3}})\\
				&+\mathcal{O}\big(\ee^{b_{1}(B_{1}-B_{2})t}\big),\\
				q_{12,-}^{[2]}(x,t)=&2b_{2}\gamma_{22}^{*}(\gamma_{12}^{*}\sqrt{g_{3}})^{-1} \ee^{h_{3}-\ii\eta_{4i}}\ {\rm sech}(h_{4}-\eta_{4r})+\mathcal{O}\big(\ee^{-b_{1}(B_{1}-B_{2})t}\big),
			\end{split}
		\end{equation*}
		and
		\begin{equation*}
			\begin{split}
				&h_{3}=-\ii\Big(2a_{2}x+2^{1+2\epsilon}\big(4\alpha a_{2}(a_{2}^{2}-3b_{2}^{2})+a_{2}^{2}-b_{2}^{2}\big)(a_{2}^{2}+b_{2}^{2})^{\epsilon}t\Big),\\
				&h_{4}=2b_{2}x+4^{1+\epsilon}\big(2\alpha b_{2}(3a_{2}^{2}-b_{2}^{2})+a_{2}b_{2}\big)(a_{2}^{2}+b_{2}^{2})^{\epsilon}t,\\
				& \eta_{3}=g_{3}-\frac{4b_{1}b_{2}g_{4}}{|\lambda_{1}-\lambda_{2}^{*}|^{2}},\ \ \ \ \ \eta_{4}=\ln\Big(\sqrt{g_{3}}\ \frac{\lambda_{1}^{*}-\lambda_{2}}{\lambda_{1}-\lambda_{2}}\Big):=\eta_{4r}+\ii\eta_{4i},\\
				&g_{3}=\frac{|\gamma_{22}|^{2}+|\gamma_{32}|^{2}}{|\gamma_{12}|^{2}},\ \ \ \ \ \ \ \ \ \ \ \ \ \ \ \  g_{4}=\dfrac{|\gamma_{21}\gamma_{22}^{*}+\gamma_{31}^{*}\gamma_{32}|^{2}}{|\gamma_{12}|^{2}\big(|\gamma_{21}|^{2}+|\gamma_{31}|^{2}\big)}.
			\end{split}
		\end{equation*}
		Note that  $\eta_{3}>0$ can also be obtained through analysis. Then
		\begin{multline*}
			q_{1,+}^{[2]}(x,t)=
			\frac{2b_{2}}{\sqrt{\eta_{3}}}\Big(\frac{\gamma_{22}^{*}}{\gamma_{12}^{*}}-\frac{2\ii b_{1}\gamma_{21}^{*}(\gamma_{21}\gamma_{22}^{*}+\gamma_{31}\gamma_{32}^{*})}{\gamma_{12}^{*}(|\gamma_{21}|^{2}+|\gamma_{31}|^{2})\big(a_{1}-a_{2}+\ii(b_{1}+b_{2})\big)}\Big)\ee^{h_{3}}\ {\rm sech}(h_{4}-\ln\sqrt{\eta_{3}})\\[2pt]
			+2b_{1}\gamma_{21}^{*}(\gamma_{11}^{*}\sqrt{g_{1}})^{-1} \ee^{h_{1}-\ii\eta_{1i}}\ {\rm sech}(h_{2}-\eta_{1r})+\mathcal{O}\big(\ee^{\iota_{1}}\big),
		\end{multline*}
		\begin{multline*}
			q_{1,-}^{[2]}(x,t)=\frac{2b_{1}}{\sqrt{\eta_{2}}}\Big(\frac{\gamma_{21}^{*}}{\gamma_{11}^{*}}+\frac{2\ii b_{2}\gamma_{22}^{*}(\gamma_{21}^{*}\gamma_{22}+\gamma_{31}^{*}\gamma_{32})}{\gamma_{11}^{*}(|\gamma_{22}|^{2}+|\gamma_{32}|^{2})\big(a_{1}-a_{2}-\ii(b_{1}+b_{2})\big)}\Big)\ee^{h_{1}}\ {\rm sech}(h_{2}-\ln\sqrt{\eta_{2}})\\[2pt]
			+2b_{2}\gamma_{22}^{*}(\gamma_{21}^{*}\sqrt{g_{3}})^{-1} \ee^{h_{3}-\ii\eta_{4i}}\ {\rm sech}(h_{4}-\eta_{4r})+\mathcal{O}\big(\ee^{\iota_{2}}\big),
		\end{multline*}
		where
		\begin{equation*}
			\iota_{1}=\min\big(b_{1}(B_{1}-B_{2})t,b_{2}(B_{1}-B_{2})t\big),\ \ \ \iota_{2}=\min\big(-b_{1}(B_{1}-B_{2})t,-b_{2}(B_{1}-B_{2})t\big).
		\end{equation*}
		The asymptotic expression of $q_{2}^{[2]}(x,t)$ can also be obtained by using the same method as above.
	\end{proof}
	We take the solution $q_{1}^{[2]}(x,t)$ as an example to compare the exact solution $q_{1}^{[2]}(x,t)$ and the asymptotic solutions $q_{1,\pm}^{[2]}(x,t)$ at $t=\pm 10$, respectively. The relevant results can be verified in Fig.\ref{fig:asy}.
	\begin{figure}
		\centering
		\includegraphics[width=0.9\linewidth]{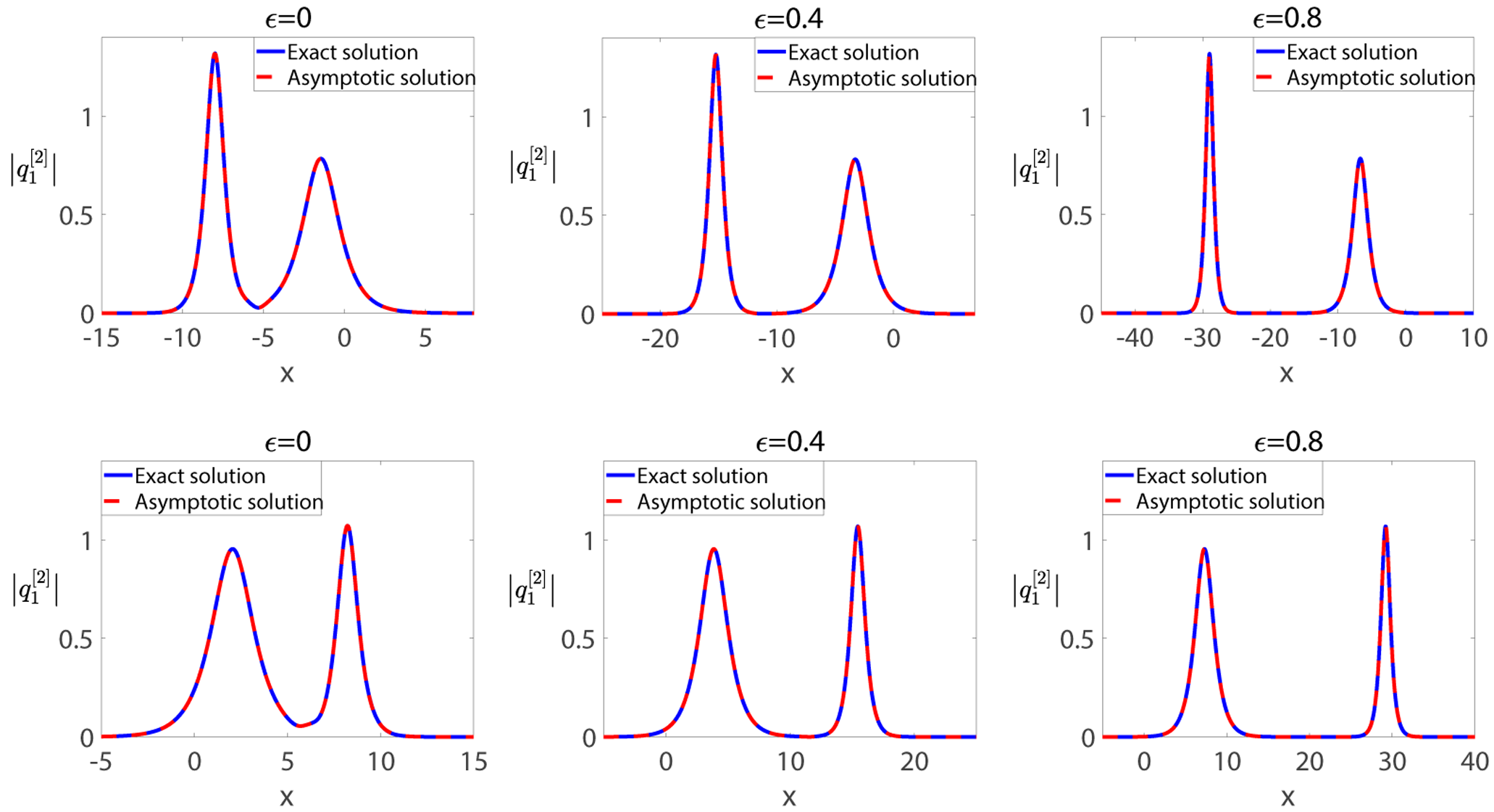}
		\caption{\small{The comparision between the exact solutions and the asymptotic solutions. The upper row is at $t=-10$, and the lower row is at $t=10$.}}
		\label{fig:asy}
	\end{figure}

	Based on the proof process of the theorem \ref{Th-asy}, we assume that the amplitudes of $\big|q_{11,-}^{[2]}\big|$,   $\big|q_{12,-}^{[2]}\big|$, $\big|q_{11,+}^{[2]}\big|$, and $\big|q_{12,+}^{[2]}\big|$ are $A_{11,-}$, $A_{12,-}$, $A_{11,+}$, and $A_{12,+}$, respectively. 
Obviously, after the collision of two solitons in the propagation process, their amplitudes will change. The amplitude of one soliton changes from $A_{11,-}$ to $A_{11,+}$, and the amplitude of the other soliton changes from $A_{12,-}$ to $A_{12,+}$. The $q_{2}$ component will have the same property. We assume that the corresponding amplitudes in the $q_{2}$ component are $A_{21,-}$, $A_{22,-}$, $A_{21,+}$, and $A_{22,+}$, respectively. Then we can obtain $A_{11,-}^2+A_{21,-}^2=A_{11,+}^2+A_{21,+}^2$ and $A_{12,-}^2+A_{22,-}^2=A_{12,+}^2+A_{22,+}^2$ from the analysis. This also means that the total mass of the components $q_{1}$ and $q_{2}$ is conserved before and after the collision.
	
	\subsection{Nondegenerate fractional soliton solution}
	In the above studies, we construct the fractional $n$-soliton solutions. When $n=1,2$, we list the explicit expressions in the equations \eqref{3-sol-1} and \eqref{3-sol-2}. Note that the bright soliton solutions we obtained in subsection \ref{subsection.one-soliton} are degenerate. Here, by degeneracy, we mean that the properties of the fundamental bright solitons in all components are described by a single wave number \cite{stalin2021nondegenerate}. The degeneracy will restrict the structure and trajectory of the single solitons. Thus the solitons in both $q_{1}$ component and $q_{2}$ component will only have a single-hump. We classify multi-component solitons into degenerate solitons and nondegenerate solitons according to the different wave numbers present in their solutions. We refer to the solitons that propagate with the same wave numbers in all components as degenerate vector solitons, and the solitons that propagate with different wave numbers as nondegenerate vector solitons \cite{stalin2021nondegenerate}. With a similar analysis as \cite{qin2019nondegenerate}, we find that the nondegenerate soliton solutions can be given by choosing the following special conditions,
	\begin{equation}\label{3-de-condition}
		\begin{cases}
			\gamma_{21}=\gamma_{32}=0\ \ {\rm or}\ \  \gamma_{22}=\gamma_{31}=0,\\[3pt]
			\big(a_{1}+2\alpha(3a_{1}^{2}-b_{1}^{2})\big)(a_{1}^{2}+b_{1}^{2})^{\epsilon}-\big(a_{2}+2\alpha(3a_{2}^{2}-b_{2}^{2})\big)(a_{2}^{2}+b_{2}^{2})^{\epsilon}=0.
		\end{cases}
	\end{equation}
	The two distinct complex 
	wave numbers $\lambda_{1}=a_{1}+\ii b_{1}$ and $\lambda_{2}=a_{2}+\ii b_{2}$ ($\lambda_{1}\neq\lambda_{2}$) in the equation \eqref{3-sol-2} makes the two-soliton solution 
	as nondegenerate one. 
	In this subsection, we take $\gamma_{21}=\gamma_{32}=0$ uniformly. 
	Then the nondegenerate fractional soliton solution is
	\begin{equation}\label{3-de-soliton}
		\begin{split}
			q_{1}^{[2]}=&\dfrac{4b_{2}\big((a_{1}-a_{2})+\ii(b_{1}+b_{2})\big)\frac{\gamma_{22}^{*}}{\gamma_{12}^{*}}\ee^{f_{4}^{*}}\Big((a_{1}-a_{2}-\ii b_{2})(1+\big|\frac{\gamma_{31}}{\gamma_{11}}\big|^{2}\ee^{2\Re(f_{3})})+\ii b_{1}(1-\big|\frac{\gamma_{31}}{\gamma_{11}}\big|^{2}\ee^{2\Re (f_{3})})\Big)}{\big((a_{1}-a_{2})^{2}+(b_{1}+b_{2})^{2}\big)\big(1+\big|\frac{\gamma_{31}}{\gamma_{11}}\big|^{2}\ee^{2\Re (f_{3})}\big)\big(1+\big|\frac{\gamma_{22}}{\gamma_{12}}\big|^{2}\ee^{2\Re(f_{4})}\big)-4b_{1}b_{2}},\\[4pt]
			q_{2}^{[2]}=&\dfrac{4b_{1}\big((a_{1}-a_{2})-\ii(b_{1}+b_{2})\big)\frac{\gamma_{31}^{*}}{\gamma_{11}^{*}}\ee^{f_{3}^{*}}\Big((a_{1}-a_{2}+\ii b_{1})(1+\big|\frac{\gamma_{22}}{\gamma_{12}}\big|^{2}\ee^{2\Re(f_{4})})-\ii b_{2}(1-\big|\frac{\gamma_{22}}{\gamma_{12}}\big|^{2}\ee^{2\Re (f_{4})})\Big)}{\big((a_{1}-a_{2})^{2}+(b_{1}+b_{2})^{2}\big)\big(1+\big|\frac{\gamma_{31}}{\gamma_{11}}\big|^{2}\ee^{2\Re (f_{3})}\big)\big(1+\big|\frac{\gamma_{22}}{\gamma_{12}}\big|^{2}\ee^{2\Re(f_{4})}\big)-4b_{1}b_{2}},
		\end{split}
	\end{equation}
	where
	\begin{multline}
		f_{k+2}=-2b_{k}x-4^{1+\epsilon}b_{k}\big(a_{k}+2\alpha(3a_{k}^{2}-b_{k}^{2})\big)(a_{k}^{2}+b_{k}^{2})^{\epsilon}t\\
		+\ii\Big(2 a_{k}x+2^{1+2\epsilon}\big(4\alpha a_{k}(a_{k}^{2}-3b_{k}^{2})+(a_{k}^{2}-b_{k}^{2})\big)(a_{k}^{2}+b_{k}^{2})^{\epsilon}t\Big),\ \ \ \ k=1,2.
	\end{multline}
	The solution \eqref{3-de-soliton} allows both the symmetric and the asymmetric profiles, including a double-hump and a single-hump profiles. The symmetric and asymmetric properties of the solution \eqref{3-de-soliton} can be determined by calculating the corresponding extremum points. 
	We assume $\gamma_{11}=\gamma_{12}=1$, then the different density profiles for double-hump solitons in fcHirota equation can be obtained by choosing appropriate values for the remaining parameters which are shown in Fig.\ref{fig:v-equal}. 
	The symmetric double-hump soliton in Fig.\ref{fig:v-equal}(b) is obtained numerically. According to the extreme value theory, we numerically find two maximum points of the modulu of $q_{1}^{[2]}$ in the equation \eqref{3-de-soliton}, and at these two maximum points, the corresponding maxima are equal. Under the selection of such parameters, symmetric double-hump soliton will appear.
	\begin{figure}
		\centering
		\includegraphics[width=0.9\linewidth]{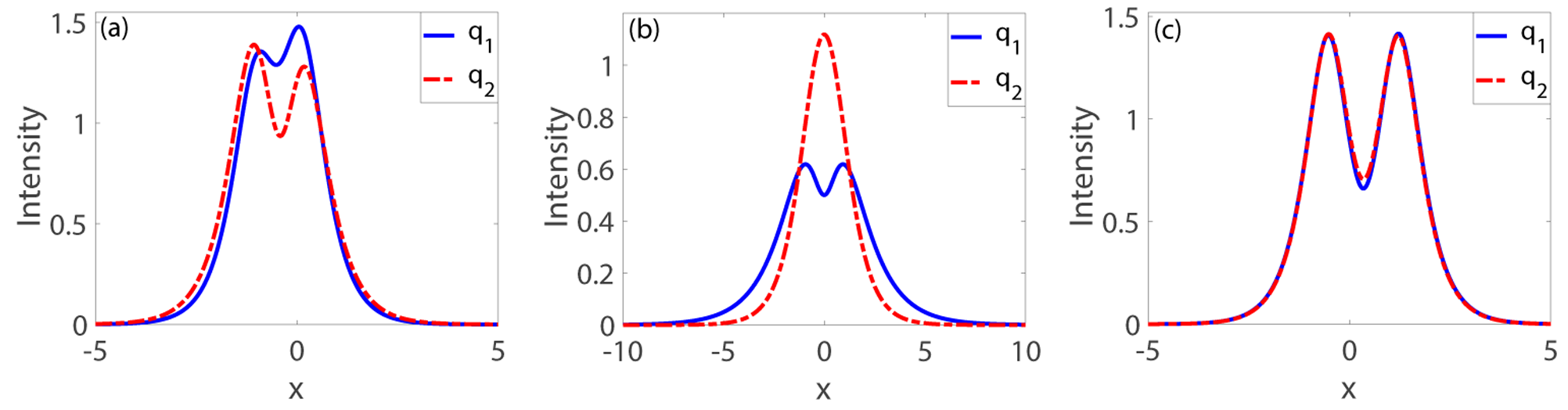}
		\caption{\small{Three different density profiles for double-hump solitons in fcHirota equation at $t=0$: (a) asymmetric double-hump solitons, (b) symmetric single-hump-double-hump solitons, and (c) approximate symmetric double-hump solitons. The blue solid lines and the red dotted lines represent the $q_{1}$ component and the $q_{2}$ component, respectively. The
				parameters are as follows: (a) $\epsilon=\frac{2}{5},\ \gamma_{31}=\gamma_{22}=\frac{1}{5},\ \alpha=\frac{1}{4},\ a_{1}=-\frac{9}{20},\ b_{1}=\frac{2\sqrt{5}}{5},\ a_{2}=-\frac{1}{20},\ b_{2}=1$, (b) $\epsilon=\frac{2}{5},\ \gamma_{31}=\gamma_{22}=\frac{99}{140}-\frac{1}{27800},\ \alpha=\frac{1}{4},\ a_{1}=0,\ b_{1}=\frac{5}{8},\ a_{2}=-\frac{1}{2},\ b_{2}=\frac{3}{8}$, (c) $\epsilon=\frac{2}{5},\ \gamma_{31}=\gamma_{22}=\frac{1}{2},\ \alpha=1,\ a_{1}=0,\ b_{1}=1,\ a_{2}=-\frac{1}{8},\ b_{2}=\frac{3\sqrt{7}}{8}$.}}
		\label{fig:v-equal}
	\end{figure}

	Furthermore, when we choose $\alpha=0$, the solution \eqref{3-de-soliton} is reduced to the nondegenerate fractional soliton solution of the fractional coupled NLS equation. And we display three different density profiles for double-hump solitons in Fig.\ref{fig:v-equal-cnls}. For convenience, we also assume $\gamma_{11}=\gamma_{12}=1$.
	\begin{figure}
		\centering
		\includegraphics[width=0.9\linewidth]{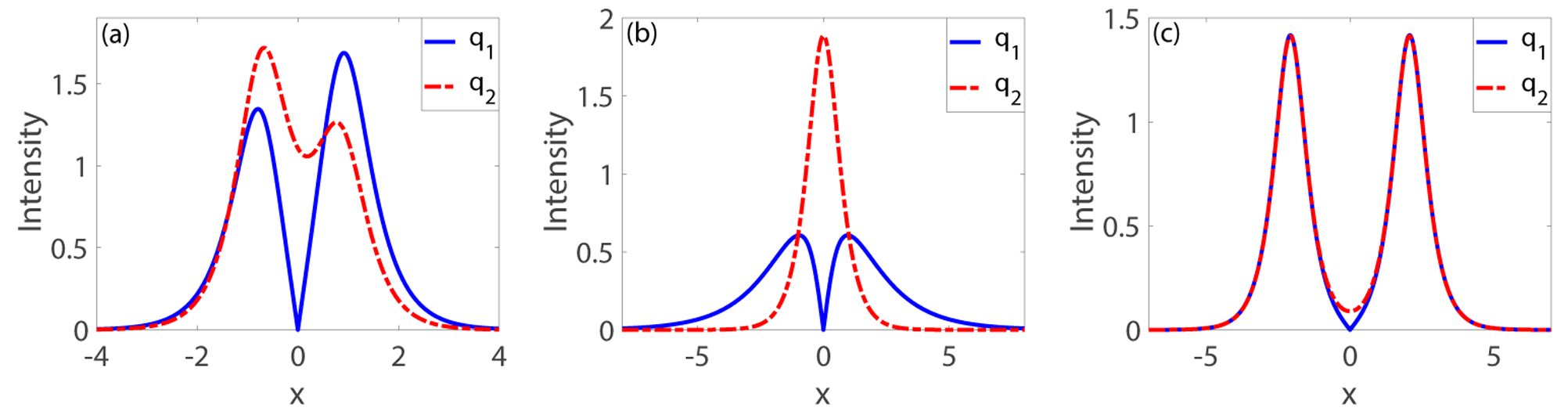}
		\caption{\small{Three different density profiles for double-hump solitons in fractional coupled NLS equation at $t=0$: (a) asymmetric double-hump solitons, (b) symmetric single-hump-double-hump solitons, and (c) approximate symmetric double-hump solitons. The blue solid lines and the red dotted lines represent the $q_{1}$ component and the $q_{2}$ component, respectively. Here the parameters
				are: (a) $\epsilon=\frac{2}{5},\ \gamma_{31}=\frac{1}{\sqrt{15}},\ \gamma_{22}=\frac{1}{3},\ a_{1}=a_{2}=0,\ b_{1}=\frac{8}{7},\  b_{2}=1$, (b) $\epsilon=\frac{2}{5},\  \gamma_{31}=\gamma_{22}=\frac{\sqrt{2}}{2},\ a_{1}=a_{2}=0,\ b_{1}=1,\  b_{2}=\frac{1}{3}$, (c) $\epsilon=\frac{2}{5},\  \gamma_{31}=\gamma_{22}=\frac{\sqrt{2001}}{2001},\ a_{1}=a_{2}=0,\ b_{1}=1+\frac{1}{1000},\  b_{2}=1$.}}
		\label{fig:v-equal-cnls}
	\end{figure}
	
	Note that when $t=0$, the solution \eqref{3-de-soliton} appears to be  independent of the parameter $\alpha$, but in fact, the parameters $a_{1},\ a_{2},\ b_{1}$, and $b_{2}$ must satisfy the second condition in the equation \eqref{3-de-condition}. Therefore, the solution \eqref{3-de-soliton} at $t=0$ will still be controlled by the parameter $\alpha$.
	Moreover, it is worth noting that the solution \eqref{3-de-soliton} appears zero if and only if $\alpha=0$ and $b_{1}>b_{2}$. For example, it is obvious that the parameters taken in Fig.\ref{fig:v-equal-cnls} (a)-(c) satisfy the above conditions.

	\section{Conclusion}
	In conclusion, we developed the fractional extension of the coupled Hirota equation in terms of the Riesz fractional derivative. Then we investigate the IST with the matrix RHP to study the fractional $n$-soliton solutions of the fcHirota equation. Furthermore, we analyzed the one- and two-soliton solutions of this equation. It can be found that the physical properties of the fractional solitons are identical to the regular ones expect the change in velocities described by equation \eqref{3-velocity}. And from Figs.\ref{fig:sol-1} and \ref{fig:sol-2}, the fractional solitons propagate without dissipating
	or spreading out. Moreover, we showed that a fractional two-soliton could also be
	regarded as a superposition of two fractional single solitons of the fcHirota equation as $|t|\to\infty$. In particular, we also study the nondegenerate fractional soliton solutions. In addition, the general formula of the fractional $n$-soliton solution of the fcHirota equation is derived. 
	These phenomena have
	greatly enriched the dynamics in the field of integrable systems, and will be helpful in predicting the super-dispersive transport of nonlinear waves in
	fractional nonlinear media.
	
	Furthermore, there is an important equation in integrable systems, which is called the Sasa-Satsuma equation. It can also be regarded as the higher-order NLS equation, which has the form
	\begin{equation}\label{3-HNLS}
		\ii q_{t}+\alpha_{1}q_{xx}+\alpha_{2}|q|^{2}q+\ii\delta\big(\beta_{1}q_{xxx}+\beta_{2}|q|^{2}q_{x}+\beta_{3}q(|q|^{2})_{x}\big)=0,
	\end{equation}
	where $\alpha_{j},\ \beta_{j}$ are all real constants, and $q(x,t)$ is a complex valued function. When $\beta_{1}:\beta_{2}:\beta_{3}=1:6:0$ and $\beta_{1}:\beta_{2}:\beta_{3}=1:6:3$, the equation \eqref{3-HNLS} can be reduced to the Hirota equation \cite{hirota1973exact} and the Sasa-Satsuma equation \cite{sasa1991new}, respectively. Obviously, the Sasa-Satsuma equation has one more term called the stimulated Raman scattering than the Hirota equation. In \cite{xu2014soliton,nimmo2015binary,liu2017certain}, it can be seen that the properties of the corresponding solutions of the Sasa-Satsuma equation are more complicated than the counterparts of the Hirota equation. So it is significant to study the Sasa-Satsuma equation. Compared with the Hirota equation, the Sasa-Satsuma equation also admits the symmetric double-hump solitons in addition to the common single-hump solitons under the vanishing background. The double-hump solitons exhibit like two in-phase solitons propagating at the same speed and a fixed separation. However, for the Hirota equation, the double-hump solitons appear only in its two-component or multi-component system. In Sec.\ref{3-sec.3}, we find that the double-hump solitons also exist in fcHirota equation, and the related properties of the solitons of the fcHirota equation are very similar to the coupled Hirota equation. We can speculate that the fractional Sasa-Satsuma equation may also have the same properties as the Sasa-Satsuma equation, for example, the symmetric double-hump solitons may also exist under the vanishing background. All these issues discussed above are worth our further consideration. 
	
	\section*{Acknowledgements}
	Liming Ling is supported by National Natural Science Foundation of China (No. 12122105). Xiaoen Zhang is supported by the National Natural Science Foundation of China (No.12101246), the China Postdoctoral Science Foundation (No. 2020M682692), the Guangzhou Science and Technology Program of China(No. 202102020783).
	
	\section*{Conflict of interest}
	The authors declare that they have no conflict of interest with other people or organization that may inappropriately influence the author's work.

	\bibliographystyle{unsrt}
	
	\bibliography{reference}
	
\end{document}